\def\SPRINGER{}
\let\phi\varphi
\let\epsilon\varepsilon
\let\emptyset\varnothing
\newcommand{\limpl}{\Rightarrow}
\newcommand{\bigand}{\bigwedge}
\newcommand{\bigor}{\bigvee}
\newcommand{\union}{\cup}
\newcommand{\bigunion}{\bigcup}
\newcommand{\isect}{\cap}
\newcommand{\join}{\sqcup}
\newcommand{\meet}{\sqcap}
\newcommand{\pset}[1]{\mathcal{P}(#1)}
\newcommand{\true}{\mathit{true}}
\newcommand{\lfun}[2]{\lambda#1.#2}
\newcommand{\lfpName}{\operatorname{lfp}}
\newcommand{\lfpoName}[1]{\lfpName_{#1}}
\newcommand{\lfpo}[2]{\lfpoName{#1}\!#2}
\newcommand{\lfplo}[3]{\lfpo{#1}{\lambda#2.#3}}
\newcommand{\assume}[2][]{\ifthenelse{\equal{#1}{}}{[#2]}{[#2,#1]}}
\newcommand{\abstrName}[1][]{\alpha^{#1}}
\newcommand{\abstr}[2][]{\abstrName[#1](#2)}
\newcommand{\concreteName}[1][]{\gamma_{#1}}
\newcommand{\concrete}[2][]{\concreteName[#1](#2)}
\newcommand{\postName}[1][]{\operatorname{post}^{#1}}
\newcommand{\post}[3][]{\postName[#1](#2,#3)}
\newcommand{\preName}[1][]{\operatorname{pre}^{#1}}
\newcommand{\pre}[3][]{\preName[#1](#2,#3)}
\newcommand{\D}[1][]{\mathbb{D}_{#1}}
\newcommand{\Dbot}[1][]{\bot_{#1}}
\newcommand{\Dtop}[1][]{\top_{#1}}
\newcommand{\Dleq}[1][]{\sqsubseteq_{#1}}
\newcommand{\Dgeq}[1][]{\sqsupseteq_{#1}}
\newcommand{\Djoin}[1][]{\join_{#1}}
\newcommand{\Dmeet}[1][]{\meet_{#1}}
\newcommand{\DpostName}[1][\sharp]{\postName[#1]}
\newcommand{\Dpost}[3][\sharp]{\post[#1]{#2}{#3}}
\newcommand{\DpreName}[1][\sharp]{\preName[#1]}
  \theoremstyle{definition}
  \newtheorem{proposition}{Proposition}
\title{Combining Forward and Backward Abstract Interpretation of Horn Clauses
\thanks{This work was partially supported by the
\href{http://erc.europa.eu/}{European Research Council} under the
European Union's Seventh Framework Programme (FP/2007-2013) / ERC Grant
Agreement nr. 306595 \href{http://stator.imag.fr}{``STATOR''}.}}
\author{%
Alexey Bakhirkin%
\and David Monniaux
}
\institute{Univ. Grenoble Alpes, VERIMAG, F-38000 Grenoble, France\\
CNRS, VERIMAG, F-38000 Grenoble, France}
\date{}
\renewcommand{\v}[1]{\mathbf{#1}}
\newcommand{\hcbody}[2]{\ifthenelse{\equal{#1}{} \or \equal{#2}{}}{#2#1}{#2,#1}}
\newcommand{\hc}[3]{#3 \leftarrow \hcbody{#1}{#2}}
\newcommand{\pfalse}{\mathfrak{f}}
\newcommand{\Atoms}{\mathbb{A}}
\newcommand{\HSys}{\mathbb{H}}
\newcommand{\TRel}[1][\HSys]{\mathbb{T}_{#1}}
\newcommand{\TInit}[1][\HSys]{\mathbb{I}_{#1}}
\newcommand{\TStep}[1][\HSys]{\mathbb{T}_{#1}^{\rightarrow}}
\newcommand{\TPred}{T_{\Pi}}
\newcommand{\preR}[3]{\preName\!|_{#1}(#2,#3)}
\newcommand{\DpreR}[3]{\DpreName\!|_{#1}(#2,#3)}
\newcommand{\goal}{A_g}
\newcommand{\dgoal}{g}
\newcommand{\Pgoal}{\Pi_g}
\newcommand{\tgoal}{T_g}
\newcommand{\df}{d}
\newcommand{\db}{b}
\newcommand{\leaf}[1]{\mathit{leaf}(#1)}
\newcommand{\tree}[2]{\mathit{tree}(#1 \leftarrow #2)}
\renewcommand{\root}[1]{\mathit{root}(#1)}
\newcommand{\isnode}[2]{\mathit{isnode}(#1,#2)}
\newcommand{\leaves}[1]{\mathit{leaves}(#1)}
\newcommand{\TpostName}{\postName[t]}
\newcommand{\Tpost}[2]{\post[t]{#1}{#2}}
\newcommand{\Tpre}[2]{\pre[t]{#1}{#2}}
\newcommand{\Tabstr}[1]{\abstr[t]{#1}}
\newcommand{\TabstrName}{\abstrName[t]}
\newcommand{\bparagraph}[1]{\par\vspace{.5\baselineskip}\noindent\textbf{#1} }
\newcounter{exmplNo}
\newtoks\exmplTitle
\newenvironment{exmpl}[2][]{%
  \refstepcounter{exmplNo}%
  \ifthenelse{\equal{#1}{}}{%
    \exmplTitle={}%
  }{%
    \exmplTitle={\ -- #1}%
  }%
  \bparagraph{Example \theexmplNo \the\exmplTitle.}%
  \label{#2}%
}{}
\begin{document}

\maketitle

\ifdefined\HCVS
  \tikz[remember picture,overlay]
  \node[shape=rectangle,fill=red!10,minimum width=\paperwidth,anchor=north west] 
  at (current page.north west) {We submit this paper for the presentation-only 
  category. If was originally submitted and accepted to SAS'2017 conference.};
\else
\fi

\begin{abstract}
  Alternation of forward and backward analyses is a standard technique in 
  abstract interpretation of \emph{programs}, which is in particular useful when 
  we wish to prove unreachability of some undesired program states.
  The current state-of-the-art technique for combining forward (bottom-up, in 
  logic programming terms) and backward (top-down) abstract interpretation of 
  \emph{Horn clauses} is query-answer transformation.
  It transforms a system of Horn clauses, such that standard forward analysis 
  can propagate constraints both forward, and backward from a goal.
  Query-answer transformation is effective, but has issues that we wish to 
  address.
  For that, we introduce a new backward collecting semantics, which is suitable 
  for alternating forward and backward abstract interpretation of Horn clauses.
  We show how the alternation can be used to prove unreachability of the goal 
  and how every subsequent run of an analysis yields a refined model of the 
  system.
  Experimentally, we observe that combining forward and backward analyses is 
  important for analysing systems that encode questions about reachability in C 
  programs.
  In particular, the combination that follows our new semantics improves the 
  precision of our own abstract interpreter, including when compared to a 
  forward analysis of a query-answer-transformed system.
\end{abstract}

\section{Introduction}

In the past years, there has been much interest in using Horn clauses for 
program analysis, i.e., to encode the program semantics and the analysis 
questions as a system of Horn clauses and then use a dedicated Horn clause 
solver to find a model of the system or show its unsatisfiability (see e.g., 
\cite{bgmr15horn-verification}).
In particular, collecting semantics of programs and reachability questions can 
be encoded as \emph{constrained Horn clauses}, or \emph{CHCs}.

With this approach, Horn clauses become a common language that allows different 
tools to exchange program models, analysis questions and analysis results.
For example, as part of this work, we implemented a polyhedra-based abstract 
interpreter for CHCs.
We use an existing tool SeaHorn \cite{gurfinkel15seahorn} to convert questions 
about reachability in C programs into systems of CHCs, and
this way we can use our abstract interpreter to analyse numeric C programs 
without having to ourselves implement the semantics of C.
Additionally, Horn clauses allow to build complicated abstract models of 
programs, as opposed to implementing the abstraction mostly as part of the 
abstract domain.
For example, D. Monniaux and L. Gonnord propose \cite{david16cell-morphing} a 
way to abstract programs that use arrays into array-free Horn clauses, and we 
are not aware of a domain that implements their abstraction.

On the other hand, this approach makes it more important to implement different 
precision-related techniques and heuristics in the analyser, since we have 
little control over how the problem description is formulated, when it is 
produced by an external procedure.
One technique that is important for disproving reachability using abstract 
interpretation is the combination of forward and backward analyses.
The idea is to alternate forward and backward analyses, and build an 
over-approximation of the set of states that are both reachable from the program 
entry and can reach an undesired state (Patrick and Radhia Cousot give a good 
explanation of the technique \cite[section 4]{cousot99refining-mc}).

Patrick and Radhia Cousot also propose to use a combination of forward and 
backward analyses a for logic programs \cite{cousot92logic-programs}.
Their combination is based on the intersection of forward (bottom-up, in logic 
programming terms\footnote{%
In this paper, we use the terms \emph{bottom-up} and \emph{top-down} in the 
meanings that they bear in logic programming and thus they correspond to 
\emph{forward} and \emph{backward} analysis respectively.
In program analysis, \emph{bottom-up} may mean \emph{from callees to callers} or 
\emph{from children to parents in the AST}, but this is \emph{not} the meaning 
that we intend in this paper.
})
and backward (top-down) collecting semantics, which, as we observe in 
\autoref{sec:main}, is too over-approximate for our purposes.
The current state-of-the-art technique for combining forward and backward 
analyses of Horn clauses is query-answer transformation \cite{gallagher15qa}.
The idea is to transform a system of Horn clauses, such that standard forward 
analysis can propagate constraints both forward from the facts, and backward 
from a goal.
Query-answer transformation is effective, e.g., B. Kafle and J. P. Gallagher 
report \cite{gallagher15qa} that it increases the number of benchmark programs 
that can be proven safe both by their abstract interpreter and by a pre-existing 
CEGAR-based analyser.
Still, query-answer transformation has some issues, which we outline  (together 
with its advantages) in \autoref{sec:background/qa} and revisit in 
\autoref{sec:main/revisiting-qa}.

To address the issues of the existing techniques, we introduce a new backward 
collecting semantics of CHCs, which offers more precision when combining forward 
and backward abstract interpretation.
We show how the analysis based on the new semantics can be used to prove 
unreachability of a goal and how every subsequent run of the analysis yields a 
refined model of the system.
In particular, if the goal is proven to be unreachable, our analysis can produce 
a model of the system that is disjoint from the goal, which allows to check the 
results of the analysis and to communicate them to other tools.
These are the main contributions of this paper.
To evaluate our approach, we take programs from the categories ``loops'', and 
``recursive'' of the Competition on Software Verification SV-COMP 
\cite{url-sv-comp}.
We use the existing tool SeaHorn to translate these programs to systems of Horn 
clauses.
We observe that the alternation of forward and backward analyses following our 
new semantics improves the precision of our own
abstract interpreter (i.e., it allows to prove safety of more safe programs) 
including when compared to forward analysis of a query-answer-transformed 
system.

\section{Background}
\label{sec:background}

We say that a \emph{term} is a variable, a constant, or an application of an 
\emph{interpreted} function to a vector of terms.
To denote vectors of terms, we use bold letters.
Thus, $\v{t}$ denotes a vector of terms;
$\phi[\v{x}]$ (assuming elements of $\v x$ are distinct) denotes a formula 
$\phi$, where the set of free variables is the set of elements of $\v{x}$; and
$\phi[\v{x}/\v{t}]$ denotes a formula that is obtained from $\phi$ by 
simultaneously replacing (substituting) every occurrence of $x_i \in \v{x}$ with 
the corresponding element $t_i \in \v{t}$.

\bparagraph{CHCs.}
A constrained Horn clause (CHC) is a first order formula of the form
\[
\forall X.\big(\,p_1(\v{t_1}) \land p_2(\v{t_2}) \land \cdots \land p_n(\v{t_n}) 
\land \phi \limpl p_{n+1}(\v{t_{n+1}})\,\big)
\]
where $p_i$ are uninterpreted predicate symbols, $\v t_i$ are vectors of terms;
$\phi$ is a quantifier-free formula in some background theory and does not 
contain uninterpreted predicates or uninterpreted functions; and $X$ includes 
all free variables of the formula under the quantifier.
Following standard notation in the literature, we write a Horn clause as
\[
\hc{p_1(\v{t_1}),p_2(\v{t_2}),\cdots,p_n(\v{t_n})}{\phi}{p_{n+1}(\v{t_{n+1}})}
\]
that is, with free variables being \emph{implicitly} universally quantified.
We use a capital letter to denote an \emph{application} of a predicate to 
\emph{some} vector of terms (while for predicate symbols, we use lowercase 
letters).
Thus, when the terms in predicate applications are not important, we can write 
the above clause as
\[
\hc{P_1,P_2,\cdots,P_n}{\phi}{P_{n+1}}
\]

The predicate application $P_{n+1}$ is called the \emph{head} of the clause, and 
the conjunction $\hcbody{P_1,P_2,\cdots,P_n}{\phi}$ is called the \emph{body}.
A CHC always has a predicate application as its head.
\emph{But}, we assume that there exists a distinguished 0-ary predicate 
$\pfalse$ that denotes falsity and is only allowed to appear in the head of a 
clause.
A clause that has $\pfalse$ as its head is called an \emph{integrity 
constraint}.
For example, an assertion $\hc{P}{\phi}{\psi}$ can be written as the integrity 
constraint: $\hc{P}{(\phi \land \neg \psi)}{\pfalse}$.

A \emph{system} is a set of CHCs that is interpreted as their conjunction.

\bparagraph{Models of CHCs.}
We say that an \emph{atom} is a formula of the form $p(c_1,\cdots,c_n)$, where 
$p$ is an n-ary predicate symbol and $c_i$ are constants.
We denote the set of all atoms by $\Atoms$.

An \emph{interpretation} is a set of \emph{atoms} $M \subseteq \Atoms$.
One can say that an interpretation gives \emph{truth assignment} to atoms: an 
atom is interpreted as \emph{true} if it belongs to the interpretation and as 
\emph{false} otherwise.
This way, an interpretation also provides a truth assignment to every formula, 
by induction on the formula structure.

For a system of CHCs, a \emph{model} (or \emph{solution}) is an interpretation 
that that makes every clause in the system $\true$ (note that all variables in a 
system of Horn clauses are universally quantified, and thus the model does not 
include variable valuations).
We call a model $M \subseteq \Atoms$ \emph{safe} when $\pfalse \notin M$ (many 
authors prefer to call an interpretation $M$ a model only when it does not 
include $\pfalse$, but we prefer to have both notions).
A system of CHCs always has the \emph{minimal} model w.r.t. subset ordering 
(see, e.g., \cite[section 4]{jaffar94clp-survey}).
If a system has no clauses of the form $\hc{}{\phi}{P}$, its least model is 
$\emptyset$.
We call a system of CHCs safe iff it has a safe model.
In particular, for a safe system, its least model is safe, and thus, for a safe 
system, there exists the smallest safe model.
For every system of CHCs, the set of atoms $\Atoms$ is the greatest (unsafe) 
model, but a safe system in general may not have the greatest safe model.

\bparagraph{Fixed Point Characterization of the Least Model.}
A system of CHCs $\HSys$ induces the \emph{direct consequence relation} $\TRel 
\subseteq \pset{\Atoms}\times\Atoms$, which is constructed as follows.
A tuple $\big(\{p_1(\v{c_1}),\cdots,p_n(\v{c_n})\}, p_{n+1}(\v{c_{n+1}})\big) 
\in \TRel$ iff the system $\HSys$ contains a clause 
$\hc{p_1(\v{t_1}),\cdots,p_n(\v{t_n})}{\phi}{p_{n+1}(\v{t_{n+1}})}$, such that 
$\phi \land \bigand_{i=1}^{n+1} \v{c_i} = \v{t_i}$ is satisfiable.\footnote{%
There may be a slight abuse of notation here.
When writing down the set as $\{p_1(\v{c_1}),\cdots,p_n(\v{c_n})\}$, we do 
\emph{not} assume that all $p_i$ or all $\v{c_i}$ are distinct and that the set 
has exactly $n$ elements.
}
In particular, every clause of the form $\hc{}{\phi}{p(\v{t})}$ induces a set of 
\emph{initial transitions} (or \emph{initial consecutions}) of the form 
$(\emptyset, p(\v{c}))$, where $\phi \land (\v{c} = \v{t})$ is satisfiable.
Direct consequence relation can be seen as a variant of a direct consequence 
function discussed by J.  Jaffar and M. J.  Maher \cite[section 
4]{jaffar94clp-survey}.

Note that $\TRel$ is unlike an ordinary transition relation and relates a 
\emph{set} of atoms with a single atom that is their direct consequence.
To work with such a relation, we can adapt the standard in program analysis 
definition of post-condition as follows:
\[
\post{\TRel}{X} = \{ a' \mid \exists A \subseteq X.\,(A,a') \in \TRel \}
\]
Then, the least model of $\HSys$ can be characterised as the least fixed point:
\begin{equation}
  \label{eq:forward-semantics}
  \lfplo{\subseteq}{X}{\post{\TRel}{X}}
\end{equation}
As standard in abstract interpretation, we call the fixed point 
\eqref{eq:forward-semantics} the forward (bottom-up, in logic programming terms) 
collecting semantics of $\HSys$.
In general, every pre-fixpoint of the consequence operator, i.e., every set $M$, 
s.t. $\post{\TRel}{M} \subseteq M$ is a model of $\HSys$.

\bparagraph{Analysis Questions.}
Given a system of CHCs $\HSys$, the analysis question may be stated in a number 
of ways.
Often we want to know whether the system is safe, i.e., whether the least model 
of $\HSys$ contains $\pfalse$.
More generally, we may be given a set of goal atoms $\goal \subseteq \Atoms$.
Then, the analysis question will be whether the goal is unreachable, i.e.  
whether the goal and the least model are disjoint.
In this case, we start by computing a (reasonably small) model $M$ of $\HSys$.
If $M \isect \goal = \emptyset$, we conclude that the goal is unreachable.
Otherwise, we either report an inconclusive result (since the computed $M$ will 
in general not be the smallest model), or attempt to compute a refined model $M' 
\subset M$.

Alternatively, we may want to produce a model of $\HSys$ that gives us some 
non-trivial information about the object described by $\HSys$.
In this case, we usually want to produce some reasonably small model, which is 
what abstract interpretation tries to do.
The goal may or may not be given.
For example, we may be only interested in some part of the object (say, a subset 
of procedures in a program), which is described by a subset of predicates $\Pi$.
Then, the goal will be the corresponding set of atoms $\goal = \{ p(\v c) \mid p 
\in \Pi\}$.

\subsection{Abstract Interpretation of CHCs}
\label{sec:background/chc-ai}

Abstract interpretation \cite{cousot77abst-int} provides us a way to 
\emph{compute an over-approximation} of the least model, following the fixed 
point characterization.
To do so, we introduce the \emph{abstract domain} $\D$ with the least element 
$\Dbot$, greatest element $\Dtop$, partial order $\Dleq$ and join $\Djoin$.
Every element of the abstract domain $d \in \D$ represents the set of atoms 
$\concrete{d} \subseteq \Atoms$.
Then, we introduce the abstract consequence operator $\DpostName$ which 
over-approximates the concrete operator $\postName$, i.e., for every $d \in \D$,
$
\concrete{\Dpost{H}{d}} \supseteq \post{\TRel}{\concrete{d}}
$.
If we are able to find such element $d_m \in \D$ that $\Dpost{H}{d_m} \Dleq d_m$ 
then $\concrete{d_m}$ is a pre-fixpoint of the direct consequence operator and 
thus a model of $\HSys$ (not necessarily the smallest one).
At this point, it does not matter how we compute $d_m$.
It may be a limit of a Kleene-like iteration sequence (as in our implementation) 
or it may be produced by policy iteration 
\cite{george16policy-iteration,seidl12policy-iteration}, etc.

One can expect that an element $d \in \D$ is partitioned by predicate, in the 
same way as in program analysis, domain elements are partitioned by program 
location.
In the simple case, every element $d \in \D$ will have a logical representation 
in some theory and one can think that it maps every predicate $p_i$ to a 
quantifier-free formula $\delta_i[\v{x_i}]$, where $\v{x_i}$ correspond to the 
arguments of $p_i$.
For example, when using a polyhedral domain, $d$ will map every predicate to a 
conjunction of linear constraints.
For simplicity of syntactic manipulations, we can assume that $\v{x_i}$ are 
distinct vectors of distinct variables, i.e., a given variable appears only in 
one vector $\v{x_i}$ and only once.

From this, we can derive a recipe for Kleene-like iteration.
Let $d \in \D$ be the current fixpoint candidate that maps every predicate $p_i$ 
to a formula $\delta_i[\v{x_i}]$.
We try to find a clause 
$\hc{p_1(\v{t_1}),\cdots,p_n(\v{t_n})}{\phi}{p_{n+1}(\v{t_{n+1}})}$ (where $n 
\geq 0$), such that the following formula is satisfiable:
\begin{equation}
  \label{eq:hc-negation}
  \phi \land \delta_1[\v{x_1}/\v{t_1}] \land \cdots \land 
  \delta_n[\v{x_n}/\v{t_n}] \land \neg \delta_{n+1}[\v{x_{n+1}}/\v{t_{n+1}}]
\end{equation}
If it is, we find a \emph{set} of models of \eqref{eq:hc-negation}, and if some 
model assigns the vector of constants $\v{c_{n+1}}$ to the variables 
$\v{x_{n+1}}$, we join the atom $p_{n+1}(\v{c_{n+1}})$ to $d$.
In a polyhedral analysis, we usually want to find in every step a \emph{convex}  
set models of \eqref{eq:hc-negation}.
Assuming the formula is in negation normal form, there is a na\"ive way to 
generalize a single model to a convex set of models by recursively traversing 
the formula and collecting atomic propositions satisfied by the model
(descending into all sub-formulas for a conjunction and into one sub-formula for 
a disjunction).
In general though, this corresponds to a problem of finding a model of a Boolean 
formula that is in some sense optimal  (see, e.g., the work of J.  Marques-Silva 
et al. \cite{silva13minimal-sets}).
When the set of CHCs is produced from a program by means of large block encoding 
\cite{beyer09lbe} (e.g., SeaHorn does this by default), then $\phi$ is 
disjunctive and represents some set of paths through the original program.
Finding a convex set of models of \eqref{eq:hc-negation} corresponds to finding 
a path through the original program, along which we need to propagate the 
post-condition.
In program analysis, a similar technique is called \emph{path focusing}  
\cite{david11path-focusing,david12pagai}.

\bparagraph{Checking the Model.}
Given an element $d \in \D$, we can check whether it represents a model by 
taking its abstract consequence.
If $\Dpost{\HSys}{d} \Dleq d$ then $\concrete{d}$ is a pre-fixpoint of the 
direct consequence operator and thus is a model of $\HSys$.
When $d$ can be represented in a logical form and maps every predicate $p_i$ to 
a formula $\delta_i[\v{x_i}]$ in some theory, we can check whether it represents 
a model (i.e., that for every clause, the formula
\eqref{eq:hc-negation} is unsatisfiable) using an SMT solver.
Being able to check the obtained models provides a building block for making a 
\emph{verifiable} static analyser.

\subsection{Program Analysis and CHCs}
\label{sec:background/program-analysis-chcs}

Different flavours of Horn clauses can be used to encode in logic form different 
program analysis questions.
In particular, CHCs can be used to encode invariant generation and reachability 
problems.
In such an encoding, uninterpreted predicates typically denote sets of reachable 
memory states at different program locations, clauses of the form 
$\hc{P_1,P_2,\cdots,P_n}{\phi}{P_{n+1}}$ encode the semantics of transitions 
between the locations, clauses of the form $\hc{}{\phi}{P}$ encode the initial 
states, and the integrity constraints (of the form $\hc{P}{\phi}{\pfalse}$) 
encode the assertions.
In this paper, we limit ourselves to invariant generation and reachability, but 
other program analysis questions (including verification of temporal properties
\cite{rybalchenko13existentially-quantified}) can be encoded using other 
flavours of Horn clauses.
For more information, an interested reader can refer to a recent survey 
\cite{bgmr15horn-verification}.

\newsavebox{\parallelIncrementText}
\begin{lrbox}{\parallelIncrementText}
  \begin{lstlisting}
x = y = 0;
while (*) {
  if (x $\geq$ 0) {
    x += 1; y += 1;
  } else {
    x += 1;
  }
}
assert(x == y);
  \end{lstlisting}
\end{lrbox}

\begin{figure}[t]
  \centering
  \begin{minipage}[b]{.45\textwidth}
    \centering
    \usebox{\parallelIncrementText}
    \caption{A program that increments $x$ and $y$ in parallel.}
    \label{fig:parallel-increment-text}
  \end{minipage}
  \hfill
  \begin{minipage}[b]{.5\textwidth}
    \centering
    \[
    \begin{split}
      & \hc{}{x = 0 \land y = 0}{p(x, y)} \\
      & \hc{p(x, y)}{x \geq 0}{p(x+1, y+1)} \\
      & \hc{p(x, y)}{x < 0}{p(x+1, y)} \\
      & \hc{p(x, y)}{x \neq y}{\pfalse}
    \end{split}
    \]
    \caption{Horn clause encoding of the program in 
    \autoref{fig:parallel-increment-text}.  }
    \label{fig:parallel-increment-chc}
  \end{minipage}
\end{figure}

\begin{exmpl}[Parallel Increment]{ex:parallel-increment}
  Consider a program in \autoref{fig:parallel-increment-text}.
  It starts by setting two variables, $x$ and $y$, to zero and then increments 
  both of them in a loop a non-deterministic number of times.
  An analyser is supposed to prove that after the loop finishes, $x$ and $y$ 
  have equal values.
  This program also has an unreachable condition $x < 0$ upon which only $x$ is 
  incremented, which will be useful in the next example.
  The program  in \autoref{fig:parallel-increment-text} can be encoded into CHCs 
  as shown in \autoref{fig:parallel-increment-chc}, where the predicate $p$ 
  denotes the set of reachable states at the head of the loop, and its arguments 
  denote the variables $x$ and $y$ respectively.
  From the point of view of abstract interpretation, such a system of CHCs 
  represents a program's collecting semantics.
  For simple programs, as the one in \autoref{fig:parallel-increment-text}, a 
  model of the system of CHCs directly represents an inductive invariant of the 
  program.
  For the more complicated programs (e.g., programs with procedures) this may no 
  longer be true, but in any case, if we find a safe (not containing $\pfalse$) 
  model of the system of CHCs, we can usually conclude that the program cannot 
  reach an assertion violation.
  A model that we find with abstract interpretation will assign to every 
  predicate an element of some abstract domain;
  for a numeric program this may be a convex polyhedron (or a small number of 
  polyhedra) in a space where every dimension corresponds to a predicate 
  argument.
  Thus, for us to be able to prove safety of a program, the system of CHCs has 
  to have a safe model of the given form.
\end{exmpl}

Horn clause encoding of programs without procedures is typically straightforward 
and results in a system, where every clause has at most one predicate 
application in the body; such clauses are often called \emph{linear}.
Encoding of programs with procedures is also possible, but there are multiple 
ways of doing it.
We now give an example of a program with a procedure.

\newsavebox{\parallelIncrementInterprocText}
\begin{lrbox}{\parallelIncrementInterprocText}
  \begin{lstlisting}
void inc_xy() {
  if (x $\geq$ 0) {
    x += 1; y += 1;
  } else {
    x += 1;
  }
}
$\ldots$
x = y = 0;
while (*)
  inc_xy();
assert(x == y);
  \end{lstlisting}
\end{lrbox}

\begin{figure}[t]
  \centering
  \begin{minipage}[b]{.45\textwidth}
    \centering
    \usebox{\parallelIncrementInterprocText}
    \caption{A program that increments $x$ and $y$ in parallel using a 
    procedure.}
    \label{fig:parallel-increment-interproc-text}
  \end{minipage}
  \hfill
  \begin{minipage}[b]{.5\textwidth}
    \centering
    \[
    \begin{split}
      & \hc{}{x = 0 \land y = 0}{p(x, y)} \\
      & \hc{p(x, y)}{f(x, y, x', y')}{p(x', y')} \\
      & \hc{p(x, y)}{x \neq y}{\pfalse} \\
      & \hc{f_c(x, y)}{x \geq 0}{f(x, y, x+1, y+1)} \\
      & \hc{f_c(x, y)}{x < 0}{f(x, y, x+1, y)} \\
      & \hc{}{\true}{f_c(x, y)}
    \end{split}
    \]
    \caption{A possible Horn clause encoding of the program in 
    \autoref{fig:parallel-increment-interproc-text}.  }
    \label{fig:parallel-increment-interproc-chc}
  \end{minipage}
\end{figure}

\begin{exmpl}[Parallel Increment Using a 
  Procedure]{ex:parallel-increment-interproc}
  Consider a program in \autoref{fig:parallel-increment-interproc-text}.
  Similarly to Example~\ref{ex:parallel-increment}, it starts by setting two 
  variables, $x$ and $y$, to zero and then increments both of them in a loop, 
  but this time by calling an auxiliary procedure.
  Again the procedure has an unreachable condition $x < 0$ upon which it only 
  increments $x$.
  If we encode this program into CHCs directly (without inlining of 
  \texttt{inc\_xy}), we may arrive at a system as in 
  \autoref{fig:parallel-increment-interproc-chc}.
  This roughly corresponds to how the tool SeaHorn encodes procedures that do 
  not contain assertions.
  As before, the predicate $p$ denotes the reachable states at the loop head.
  A new predicate $f$ denotes the \emph{input-output relation} of the procedure 
  \texttt{inc\_xy}.
  If $f(x_1,y_1,x_2,y_2)$ holds, this means that if at the entry of  
  \texttt{inc\_xy} $x = x_1$ and $y = y_1$ then at the exit of \texttt{inc\_xy}, 
  it \emph{may} be the case that $x = x_2$ and $y = y_2$.
  In general, every predicate that corresponds to a location inside a procedure, 
  will have two sets of arguments: one set will correspond to the state at the 
  entry of the procedure (as the first two arguments of $f$) and the other, to 
  the corresponding state at the given location (as the last two arguments of 
  $f$).
  Note that another new predicate, $f_c$, is purely auxiliary and does 
  \emph{not} denote the reachable states at the at the initial location of 
  \texttt{inc\_xy}.
  To solve the system in \autoref{fig:parallel-increment-interproc-chc}, we need 
  to approximate the full transition relation of \texttt{inc\_xy}, which 
  includes approximating the outputs for the inputs, with which the procedure is 
  never called.
  If we analyse this program in a polyhedral domain, we will notice that
  the full input-output relation of \texttt{inc\_xy} cannot be approximated in a 
  useful way by a single convex polyhedron.
  But if we restrict the analysis to the reachable states, where $x \geq 0$ 
  always holds, we will be able to infer that \texttt{inc\_xy} increments both 
  $x$ and $y$, and this will allow to prove safety of the program.
\end{exmpl}

One may argue that we should alter the way we encode procedures and constrain 
$f_c$ to denote the set of reachable states at the entry of \texttt{inc\_xy}.
But when building an analysis tool, we should cater for different possible 
encodings.

\subsection{Combination of Forward and Backward Program Analyses}
\label{sec:background/qa}

Example~\ref{ex:parallel-increment-interproc} demonstrates the general problem 
of communicating analysis results between different program 
locations.
In an inter-procedural analysis, often we do not want to explicitly build the 
full input-output relation of a procedure.
For the inputs, with which a procedure may be called, we \emph{do} want to find 
the corresponding outputs, but for the other inputs we may want to report that 
the output is unknown.
This is because often, as in Example~\ref{ex:parallel-increment-interproc}, the 
full input-output relation will not have a useful approximation as a domain 
element.
At the same time, a useful approximation may exist when we consider only 
reachable inputs.
Similar considerations hold for intra-procedural analysis.
If we want to prove that an assertion violation is unreachable, we do not need
to explicitly represent the full inductive invariant of a program.
Instead, we want to approximate the set of states that are both reachable from 
the initial states and may reach an assertion violation.
If this set turns out to be empty, we can conclude that an assertion violation 
is unreachable.
This technique is standard for program analysis, and in \autoref{sec:main}, we 
adapt it to CHCs.

An alternative technique for Horn clauses is query-answer transformation 
\cite{gallagher15qa}.
Given the original system of CHCs $\HSys$, we build the transformed system 
$\HSys^{\rm qa}$.
For every uninterpreted predicate $p$ in $\HSys$ (including $\pfalse$), 
$\HSys^{\rm qa}$ contains a query predicate $p^q$ and an answer predicate $p^a$.
The clauses of $\HSys^{\rm qa}$ are constructed as follows.
\begin{compactitem}
  \item \emph{Answer clauses}. For every clause $\hc{P_1, \cdots, 
  P_n}{\phi}{P_{n+1}}$ (where $n \geq 0$) in $\HSys$, the system $\HSys^{\rm 
  qa}$ contains the clause $\hc{P_{n+1}^q, P_1^a, \cdots, 
  P_n^a}{\phi}{P_{n+1}^a}$.
  \item \emph{Query clauses}. For every clause $\hc{P_1, \cdots, 
  P_n}{\phi}{P_{n+1}}$ (where $n \geq 0$) in $\HSys$, the system $\HSys^{\rm 
  qa}$ contains the clauses:
  \[
  \begin{split}
    & \hc{P_{n+1}^q}{\phi}{P_1^q} \\
    & \hc{P_{n+1}^q, P_1^a}{\phi}{P_2^q} \\
    & \cdots \\
    & \hc{P_{n+1}^q, P_1^a, \cdots, P_{n-1}^a}{\phi}{P_n^q}
  \end{split}
  \]
  \item \emph{Goal clause} $\hc{\true}{}{\pfalse^q}$.
\end{compactitem}
Then, forward (bottom-up) analysis of $\HSys^{\rm qa}$ corresponds to a 
combination of forward and backward (top-down) analyses of $\HSys$.

We experienced several issues with the query-answer transformation.
For linear systems of CHCs, forward analysis of $\HSys^{\rm qa}$ corresponds to 
a single run of backward analysis of $\HSys$ followed by a single run of forward 
analysis.
For non-linear systems, this gets more complicated, though, as there will be 
recursive dependencies between query and answer predicates, and the propagation 
of information will depend on the order, in which query clauses are created.
We observed that is not enough, and for some systems the analysis needs to 
propagate the information forward and then backward multiple times.
This usually happens when the abstract domain of the analysis cannot capture the 
relation between the program variables.

\newsavebox{\additionLoopsText}
\begin{lrbox}{\additionLoopsText}
  \begin{lstlisting}
x = 0; y = *;
while(*)
  x += y;
if (x > 0) {
  while(*)
    y += x;
  assert(y $\geq$ 0);
}
  \end{lstlisting}
\end{lrbox}
\begin{wrapfigure}{L}{.4\textwidth}
  \centering
  \usebox{\additionLoopsText}
  \caption{Program, where polyhedral analysis needs to propagate information 
  forward and backward multiple times.}
  \label{fig:addition-loops-text}
\end{wrapfigure}

\begin{exmpl}{ex:fwd-backw-alternation}
  In \autoref{fig:addition-loops-text}, we show a synthetic example of a program 
  that needs more than one alternation of forward and backward analysis to be 
  proven safe.
  Notice that this program is safe, as after entering the \texttt{if}-branch in 
  line 4 we have that $x > 0 $ and $x = ky$ for some $k \geq 0$, therefore $y$ 
  is also greater than 0, and this is not changed by adding $x$ to $y$ in lines 
  5-6.
  If we work in a polyhedral domain, we cannot capture the relation $\exists k 
  \geq 0.\,x = ky$ and therefore should proceed with the safety proof in a 
  different way, e.g., as follows.
  First, we run a forward analysis and establish that at lines 5-7, $x > 0$, 
  since these lines are inside the \texttt{if}-branch.
  Then, we run a backward analysis starting with the set of states $y < 0$ at 
  line 7, which corresponds to the assertion violation.
  Since the loop in lines 5-6 can only increase $y$, we establish that for $y$  
  to be less than zero in line 7, it also has to be less than zero in lines 1-6.
  Finally, we run forward analysis again and establish that for the assertion 
  violation to be reachable, $x$ at line 4 has to be both greater than zero (so 
  that we enter the \emph{if}-branch), and less-or-equal to zero (because $x$ 
  starts being zero and in lines 2-3 we repeatedly add a negative number to it), 
  which is not possible.
  While this particular example is synthetic, in our experiments we observe a 
  small number of SV-COMP programs where a similar situation arises.
\end{exmpl}

A more subtle (but more benign) issue is that when solving the 
query-answer-transformed system, we are actually not interested in the elements 
of the interpretation of $p^a$, which are outside of $p^q$, but this is not 
captured in $\HSys^{\rm qa}$ itself.
Because of this, $p^a$ may be over-approximated too much as a result of widening 
or join.
Perhaps this is one of the reasons why B. Kafle and J. P. Gallagher propose 
\cite{gallagher15qa} to perform abstract interpretation in two phases.
First, they analyse the transformed system $\HSys^{\rm qa}$.
Then, they strengthen the original system with the interpretations of 
\emph{answer} predicates and run an analysis on the strengthened system.

To address these issues, we decided to adapt the standard (for program analysis) 
alternation of forward and backward analysis to CHCs.
We return to the comparison of our approach to query-answer transformation in 
\autoref{sec:main/revisiting-qa}.

\section{Combining Forward and Backward analysis of CHCs}
\label{sec:main}

Patrick and Radhia Cousot proposed a backward (top-down) semantics for Horn 
clauses, which
collects atoms that can appear in an SLD-resolution proof 
\cite{cousot92logic-programs}.
We take their definition as a starting point and define a new backward semantics 
and a new more precise combined forward-backward semantics.
Then we show, how we can use our new semantics to disprove reachability of a 
goal and to refine a model w.r.t. the goal.

\bparagraph{Backward Transformers and Collecting Semantics.}
First, let us introduce the pre-condition operation as follows.
For a system $\HSys$,
\[
\pre{\TRel}{A'} = \{ a \mid \exists A \subseteq \Atoms.\,\exists a' \in A'. (A, 
a') \in \TRel \land a \in A \}
\]
Then, for a system $\HSys$ and a set of goal atoms $\goal$, the backward 
(top-down) semantics is characterized by the least fixed point:
\begin{equation}
  \label{eq:backward-semantics-simple}
\lfplo{\subseteq}{X}{\goal \union \pre{\TRel}{X}}
\end{equation}
which corresponds to the semantics proposed by Patrick and Radhia Cousot.
This definition of backward semantics has a drawback though.
The intersection of forward semantics \eqref{eq:forward-semantics} and backward 
semantics \eqref{eq:backward-semantics-simple} \emph{over-approximates} the set 
of atoms that can be derived from initial clauses (of the form $\hc{}{\phi}{P}$) 
and can be used to derive the goal.
\begin{exmpl}{ex:overapproximate-intersection}
  Let us consider the following system of CHCs, where $p$ is a unary predicate 
  and $c_1,\cdots,c_5$ are constants
  \begin{equation}
    \label{eq:ex-overapproximate-intersection}
    \begin{aligned}
      & \hc{}{\true}{p(c_1)} & \hc{p(c_3)}{}{p(c_5)} \\
      & \hc{p(c_1)}{}{p(c_2)} \qquad & \hc{p(c_2),p(c_4)}{}{p(c_5)} \\
      & \hc{p(c_1)}{}{p(c_3)} \\
    \end{aligned}
  \end{equation}
  The forward semantics \eqref{eq:forward-semantics} for this system is the set 
  $\{ p(c_1), p(c_2), p(c_3), p(c_5) \}$
  (note that the atom $p(c_4)$ cannot be derived).
  Let us assume that the set of goals is $\goal = \{p(c_5)\}$.
  Then, the backward semantics \eqref{eq:backward-semantics-simple} for this 
  system is $\{ p(c_1), p(c_2), p(c_3), p(c_4), p(c_5) \}$.
  The intersection of forward and backward semantics is $\{ p(c_1), p(c_2), 
  p(c_3), p(c_5) \}$, even though the atom $p(c_2)$ is not used when deriving 
  the goal $\{p(c_5)\}$ (because we cannot derive $p(c_4)$).
  If we implement an abstract analysis based on the intersection of semantics  
  \eqref{eq:forward-semantics} and \eqref{eq:backward-semantics-simple}, this 
  will become an additional source of imprecision.
\end{exmpl}

\subsection{Forward and Backward Analyses Combined}
\label{sec:main/forward-backward-combined}

We wish to define a combination of forward and backward semantics that does not 
introduce the over-approximation observed in 
Example~\ref{ex:overapproximate-intersection}.
For that, we propose the \emph{restricted pre-condition} operation that we 
define as follows.
For a restricting set $R \subseteq \Atoms$,
\[
\preR{R}{\TRel}{A'} = \{ a \mid \exists A \subseteq R.\,\exists a' \in A'.\ (A, 
a') \in \TRel \land a \in A  \}
\]
Now, we can define the combined forward-backward collecting semantics as 
follows:
\begin{equation}
  \label{eq:fwd-backw-sem}
  \begin{split}
    & \lfplo{\subseteq}{X}{(\goal \isect M) \union \preR{M}{\TRel}{X}} \\
    & \text{where } M = \lfplo{\subseteq}{X}{\post{\TRel}{X}}
  \end{split}
\end{equation}
One can show that this semantics denotes the set of atoms that can be derived 
from initial clauses (of the form $\hc{}{\phi}{P}$) and can be used to derive 
the goal (we defer an explanation until \autoref{sec:tree-semantics}).
For example, one can see that for the system 
\eqref{eq:ex-overapproximate-intersection} discussed in 
Example~\ref{ex:overapproximate-intersection}, computing this semantics produces 
the set $\{p(c_1), p(c_3), p(c_5)\}$, as expected.

Introducing a restricted pre-condition operation is common, when a combination 
of analyses cannot be captured by the meet operation in the domain.
For example, assume that we want to analyse the instruction $z := x+y$ in an 
interval domain.
Assume also that the \emph{pre}-condition is restricted by $x \geq 3$ (e.g.,  
obtained by forward analysis) and the \emph{post}-condition is $z \in [0,2]$.
In this case, \emph{unrestricted} backwards analysis yields no new results.
But if we modify the pre-condition operation to take account of the previously 
obtained pre-condition ($x \geq 3$ in this case), we can derive the new 
constraint $y \leq -1$.

It may however be unusual to see a restricted pre-condition in concrete 
collecting semantics.
To explain it, in \autoref{sec:tree-semantics}, we introduce tree semantics of 
CHCs and show how concrete collecting semantics is itself an abstraction of tree 
semantics.
In particular, the intersection of forward and backward tree semantics abstracts 
to \eqref{eq:fwd-backw-sem}.

\bparagraph{Abstract Transformers.}
As standard in abstract interpretation, we introduce over-approximate versions 
of forward and backward transformers, resp. $\DpostName$ and $\DpreName$, s.t.
for $d, r \in \D$,
\[
  \concrete{\Dpost{\HSys}{d}} \supseteq \post{\TRel}{\concrete{d}}
  \qquad \concrete{\DpreR{r}{\HSys}{d}} \sqsupseteq 
  \preR{\concrete{r}}{\TRel}{\concrete{d}}
\]

\bparagraph{Abstract Iteration Sequence.}
In concrete world, the combination of forward and backward analyses is 
characterized by a pair of fixed points in \eqref{eq:fwd-backw-sem}.
In particular, we have the following property:
\begin{proposition}
  \label{lm:concrete-2-steps}
If we let $M = \lfplo{\subseteq}{X}{\post{\TRel}{X}}$ and $M' = 
\lfplo{\subseteq}{X}{(\goal \isect M) \union \preR{M}{\TRel}{X}}$ then 
$\lfplo{\subseteq}{X}{(\post{\TRel}{X} \isect M')} = M'$.
\end{proposition}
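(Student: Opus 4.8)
The plan is to prove the claimed equality by establishing the two inclusions separately. Write $N$ for the least fixed point $\lfplo{\subseteq}{X}{(\post{\TRel}{X} \isect M')}$ whose existence is guaranteed by monotonicity and Knaster--Tarski. The inclusion $N \subseteq M'$ is immediate: since $N$ is a fixed point, $N = \post{\TRel}{N} \isect M' \subseteq M'$. All the content therefore lies in the reverse inclusion $M' \subseteq N$, and I would derive it from a single structural lemma about $M'$.

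First I would record two preliminary facts. The easy one is $M' \subseteq M$: the operator defining $M'$ maps every $X$ into $M$, because $\goal \isect M \subseteq M$ and, by the definition of the restricted pre-condition, every atom in $\preR{M}{\TRel}{X}$ belongs to some body $A \subseteq M$ and hence to $M$. The crucial one is a \emph{downward-closure} property: whenever $(A, a) \in \TRel$ with $A \subseteq M$ and $a \in M'$, then already $A \subseteq M'$. This follows directly from the fact that $M'$ is a fixed point of $X \mapsto (\goal \isect M) \union \preR{M}{\TRel}{X}$: for each $b \in A$, the body $A \subseteq M$ together with the head $a \in M'$ witnesses $b \in \preR{M}{\TRel}{M'}$, and therefore $b \in M'$. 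This is precisely the place where using the \emph{restricted} pre-condition, rather than the unrestricted one of \eqref{eq:backward-semantics-simple}, is essential.

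With the lemma in hand I would prove $M' \subseteq N$ by induction along the forward Kleene iteration. Let $M_k$ denote the $k$-th iterate of $\lambda X.\,\post{\TRel}{X}$ starting from $\emptyset$, so that $M = \bigcup_k M_k$, and let $N_k$ denote the $k$-th iterate of $\lambda X.\,(\post{\TRel}{X} \isect M')$ from $\emptyset$, so that $N = \bigcup_k N_k$. I claim $M' \isect M_k \subseteq N_k$ for all $k$. The base case $k = 0$ is trivial. For the step, take $a \in M' \isect M_{k+1}$; since $a \in M_{k+1} = \post{\TRel}{M_k}$, there is a transition $(A, a) \in \TRel$ with $A \subseteq M_k \subseteq M$. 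The downward-closure lemma gives $A \subseteq M'$, so $A \subseteq M' \isect M_k$, which by the induction hypothesis lies in $N_k$; hence $a \in \post{\TRel}{N_k} \isect M' = N_{k+1}$. Taking the union over $k$ and using $M' \subseteq M$ yields $M' = M' \isect M \subseteq N$, completing the proof.

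The one technical point I would make explicit, and which I expect to be the only real obstacle, is the Scott-continuity underlying the Kleene characterizations $M = \bigcup_k M_k$ and $N = \bigcup_k N_k$. This relies on each element of $\TRel$ relating a \emph{finite} body $A$ to its head: a finite $A \subseteq \bigcup_k X_k$ in an increasing chain already lies in some $X_k$, so $\post{\TRel}{\cdot}$, and consequently both operators above, preserves directed unions and attains its least fixed point at stage $\omega$. Once this is noted, the remainder of the argument is routine set manipulation, with the downward-closure lemma carrying all the conceptual weight.
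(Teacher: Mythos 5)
Your proof is correct, but it takes a genuinely different route from the paper's. The paper derives this proposition in three lines as a corollary of \autoref{lm:fwd-backw-model}: instantiating the abstract iteration sequence concretely with $\df_1 = M$, $\db_1 = M'$, $\df_2 = M''$, that proposition yields that $(M \setminus M') \union M''$ is a model of $\HSys$ contained in $M$, and minimality of the least model forces $(M \setminus M') \union M'' = M$, hence $M'' = M'$. The real work is thereby delegated to the appendix proof of \autoref{lm:fwd-backw-model}, which argues directly from pre-fixpoint inclusions. You instead give a self-contained argument whose conceptual core is the downward-closure lemma --- if $(A,a) \in \TRel$ with $A \subseteq M$ and $a \in M'$ then $A \subseteq M'$ --- followed by an induction along the forward Kleene iterates showing $M' \isect M_k \subseteq N_k$. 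Both arguments are sound; I checked your downward-closure step against the definition of $\preR{M}{\TRel}{\place}$ and the induction carefully, and the only external hypothesis you need is the one you flag, namely finiteness of clause bodies to justify Scott-continuity and the stage-$\omega$ characterizations $M = \bigunion_k M_k$ and $N = \bigunion_k N_k$. What each approach buys: the paper's proof is shorter, reuses machinery it needs anyway for the abstract setting, and avoids any continuity assumption (Knaster--Tarski suffices); yours isolates exactly where the \emph{restricted} pre-condition is essential --- the downward-closure property fails for the unrestricted semantics \eqref{eq:backward-semantics-simple}, which is precisely the imprecision exhibited in Example~\ref{ex:overapproximate-intersection} --- and so is arguably more illuminating as a standalone explanation of why the combined semantics \eqref{eq:fwd-backw-sem} is the right object.
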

That is, concrete forward and backward analyses need not be iterated.
We give the proof of this a bit later.
In the abstract world, this is not the case, as has already been noted for 
program analysis \cite{cousot99refining-mc}.
In general, given the abstract goal $g \in \D$, the combination of abstract 
forward and backward analyses produces the sequence:
\begin{equation}
  \label{eq:fwd-backw-sequence}
\begin{split}
  & \db_0,\,\df_1,\,\db_1,\,\df_2,\,\db_2,\,\cdots \text{ , where} \\
  & \db_0 = \Dtop, \text{ and for } i\geq 1, \\
  & \Dpost{\HSys}{\df_i} \Dmeet \db_{i-1} \Dleq \df_i \\
  & \dgoal \Dmeet \df_i \Dleq \db_i \\
  & \DpreR{\df_i}{\HSys}{\db_i} \Dleq \db_i \\
\end{split}
\end{equation}
In principle, this iterations sequence may be infinitely descending, and to 
ensure termination of an analysis, we have to limit how many elements of the 
sequence are computed.
In our experiments though, the sequence usually stabilizes after the first few 
elements.

Propositions \ref{lm:fwd-backw-model} and \ref{lm:fwd-backw-disjoint} 
respectively  show how we can refine the initial model w.r.t. the goal and how 
we can use the iteration sequence to disprove reachability of the goal.

\begin{restatable}{proposition}{lmFwdBackwModel}
  \label{lm:fwd-backw-model}
  For every $k \geq 1$, the set
  $
  \concrete{\df_k} \union \bigunion_{i=1}^{k-1} \big(\concrete{\df_i} \setminus 
  \concrete{\db_i}\big)
  $
  is a model of $\HSys$.
\end{restatable}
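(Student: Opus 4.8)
The plan is to verify that $N_k := \concrete{\df_k} \union \bigunion_{i=1}^{k-1}\big(\concrete{\df_i}\setminus\concrete{\db_i}\big)$ is a pre-fixpoint of the direct consequence operator, i.e. $\post{\TRel}{N_k}\subseteq N_k$, which by the characterization recalled above is exactly what it means to be a model of $\HSys$. Throughout, write $D_i=\concrete{\df_i}$ and $B_i=\concrete{\db_i}$. First I would push the three abstract inequalities of \eqref{eq:fwd-backw-sequence} through $\concreteName$, using monotonicity of $\concreteName$ together with soundness of the abstract meet, of $\DpostName$ and of $\DpreName$. This yields the two concrete facts I will actually use: $\post{\TRel}{D_i}\isect B_{i-1}\subseteq D_i$ (call it $(\star_i)$) and $\preR{D_i}{\TRel}{B_i}\subseteq B_i$ (call it $(\dagger_i)$), where $B_0=\Atoms$ because $\db_0=\Dtop$. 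I would also record that both chains are decreasing, $\Atoms=B_0\supseteq D_1\supseteq B_1\supseteq D_2\supseteq\cdots$; this follows once each $\df_i$ (resp.\ $\db_i$) is taken to be a fixpoint of its restricted operator, so that $\df_i\Dleq\db_{i-1}$ and $\db_i\Dleq\df_i$.

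The argument itself is direct (no induction on $k$). Fix a transition $(A,a')\in\TRel$ with $A\subseteq N_k$ and show $a'\in N_k$. Since every summand of $N_k$ is contained in $D_1$, we have $A\subseteq D_1$, and $(\star_1)$ with $B_0=\Atoms$ gives $a'\in\post{\TRel}{D_1}\subseteq D_1$. Let $j$ be the largest index in $\{1,\dots,k\}$ with $a'\in D_j$; this is well-defined because the $D_i$ decrease. If $j=k$ then $a'\in D_k\subseteq N_k$ and we are done, so assume $j<k$. It now suffices to prove $a'\notin B_j$: then $a'\in D_j\setminus B_j\subseteq N_k$.

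The crux is to assume $a'\in B_j$ and reach a contradiction by deriving $a'\in D_{j+1}$, which violates the maximality of $j$. The key sub-step is $A\subseteq D_j$: each $a\in A\subseteq N_k$ either lies in $D_{j+1}\subseteq D_j$, or, since $a\notin D_{j+1}$ and $D_k\subseteq D_{j+1}$ give $a\notin D_k$, lies in some summand $D_i\setminus B_i$ with $i\le j$, whence $a\in D_i\subseteq D_j$. With $A\subseteq D_j$ and $a'\in B_j$, the backward-closure inequality $(\dagger_j)$ forces every $a\in A$ into $B_j$, so $A\subseteq B_j$. But any $a\in A$ falling in the second case satisfies $a\notin B_i$ with $i\le j$, hence $a\notin B_j$ because the $B_i$ decrease — contradicting $A\subseteq B_j$. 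Therefore every $a\in A$ falls in the first case, i.e.\ $A\subseteq D_{j+1}$, so $a'\in\post{\TRel}{D_{j+1}}$, and $(\star_{j+1})$ together with $a'\in B_j$ gives $a'\in D_{j+1}$, the desired contradiction.

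The step I expect to be the real obstacle is exactly this last one. In the concrete semantics one would be tempted to close the argument using $B_j\subseteq D_{j+1}$ (indeed concretely the whole iteration collapses after one round, cf.\ \autoref{lm:concrete-2-steps}), but that containment is false in the abstract setting: $\db_j$ over-approximates and may contain atoms that are not forward-derivable, so they never enter $D_{j+1}$. The proof must therefore avoid any such shortcut and route everything through the backward-closure fact $(\dagger_j)$ and the bookkeeping of which summand of $N_k$ each antecedent belongs to. A secondary point to get right is that the entire case analysis depends on monotonicity of the two chains; if one only assumes the bare post-fixpoint inequalities of \eqref{eq:fwd-backw-sequence}, I would first justify $\df_i\Dleq\db_{i-1}$ and $\db_i\Dleq\df_i$ from the way the iterates are computed before running the argument above.
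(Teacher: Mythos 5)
There is a genuine gap, and it sits exactly in the step you flag as ``the key sub-step,'' namely $A\subseteq D_j$ (writing $D_i=\concrete{\df_i}$, $B_i=\concrete{\db_i}$ as you do). For an antecedent $a\in A$ with $a\notin D_{j+1}$ you correctly place $a$ in some summand $D_i\setminus B_i$ with $i\le j$, but you then conclude ``$a\in D_i\subseteq D_j$.'' In the decreasing chain $D_1\supseteq B_1\supseteq D_2\supseteq\cdots$ that you yourself set up, $i\le j$ gives $D_i\supseteq D_j$, not $D_i\subseteq D_j$, so membership in $D_i$ says nothing about membership in $D_j$. Worse, such an $a$ satisfies $a\notin B_i\supseteq D_{i+1}\supseteq D_j$ whenever $i<j$, so it is provably \emph{not} in $D_j$. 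The claim $A\subseteq D_j$ is in fact true, but only for a non-local reason: the coexistence of $a'\in D_j\subseteq B_l$ with an antecedent in $D_l\setminus B_l$ for $l<j$ is ruled out by applying $(\dagger_l)$ at that level (if $a'\in B_l$ and $A\subseteq D_l$ then all of $A$ lands in $B_l$). Establishing this requires climbing the levels $l=1,\dots,j-1$, proving $A\subseteq D_l$ and then $A\subseteq B_l\,{\limpl}\,A\subseteq D_{l+1}$ at each step --- an induction your ``direct, no induction'' argument does not contain and cannot replace by the set-containment you wrote.

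This missing piece is precisely what the paper's induction hypothesis supplies: its third clause (for every $(A,a')\in\TStep$, if $A\subseteq M_k$ and $A\isect L_k\neq\emptyset$ then $a'\in L_k$, where $L_k=\bigunion_{i<k}(D_i\setminus B_i)$) is the ``absorbing'' property that handles antecedents falling into an earlier slice $D_i\setminus B_i$, and it is carried along and re-proved at each stage via the case split $A\subseteq D_{k+1}$ / $A\subseteq D_k\setminus D_{k+1}$ / $A\not\subseteq D_k$. The rest of your write-up is sound --- the reduction to a pre-fixpoint check, the concretisation of \eqref{eq:fwd-backw-sequence} into $(\star_i)$ and $(\dagger_i)$, the explicit flagging that monotonicity of the chains needs separate justification, and the final contradiction $a'\in D_{j+1}$ via $(\star_{j+1})$ --- so the fix is localised: replace the one-line justification of $A\subseteq D_j$ with the upward induction on levels sketched above (or adopt the paper's induction on $k$ outright).
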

We present the proof in Appendix~\ref{apx:proofs}.

Observe that for some abstract domains (e.g., common numeric domains: intervals, 
octagons, polyhedra), the meet operation is usually exact, i.e.  for $d_1, d_2 \in 
\D$, $\concrete{d_1 \Dmeet
d_2} = \concrete{d_1} \isect \concrete{d_2}$.
Also, for such domains we can expect that for $r,d \in D$, $\DpreR{r}{\HSys}{d} 
\sqsubseteq r$.
In this case, the forward-backward iteration sequence is descending: $\db_0 
\Dgeq \df_1 \Dgeq \db_1 \Dgeq \df_2 \Dgeq \cdots$, and computing every 
subsequent element $\df_i$ provides a tighter model of $\HSys$ (assuming $\df_i$ 
is distinct from $\df_{i-1}$).
This comes at a cost, though, since the refined model will not in general be 
expressible in the abstract domain of the analysis.
For example, in a polyhedral analysis, when $\df_i$ and $\db_i$ are maps from 
predicates to convex polyhedra, expressing the model given by
\autoref{lm:fwd-backw-model}, requires finite sets of convex polyhedra.
If we wish to check if such an object $M$ is indeed a model of $\HSys$, we will 
need to check that $M$ \emph{geometrically covers} its post-condition.
This can be done using a polyhedra library that supports powerset domains and 
geometric coverage (e.g., Parma Polyhedra Library \cite{bagnara08ppl}) or with 
an SMT-solver .

Now, the proof of \autoref{lm:concrete-2-steps} becomes straightforward.
\begin{proof}[of \autoref{lm:concrete-2-steps}]
  Let $M'' = \lfplo{\subseteq}{X}{(\post{\TRel}{X} \isect M')}$, i.e. $M'' 
  \subseteq M'$ by definition.
  From \autoref{lm:fwd-backw-model}, $(M \setminus M') \union M'' \subseteq M$ 
  is a model of $\HSys$.
  Since $M$ is the smallest model, $(M \setminus M') \union M'' = M$ and $M'' = 
  M'$.
\end{proof}

\begin{proposition}
  \label{lm:fwd-backw-disjoint}
  If there exists $k\geq 1$, s.t. $\df_k = \Dbot$, then there exists a model $M$ 
  of $\HSys$, s.t. $M \isect \concrete{\dgoal} = \emptyset$ (i.e., the goal is 
  unreachable).
\end{proposition}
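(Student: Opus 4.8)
The plan is to instantiate \autoref{lm:fwd-backw-model} at the very index $k$ for which $\df_k = \Dbot$, and then to argue that the model it produces is already disjoint from the goal. First I would invoke the standard convention that the least element represents no atoms, $\concrete{\Dbot} = \emptyset$, so that the model furnished by \autoref{lm:fwd-backw-model} collapses:
\[
M = \concrete{\df_k} \union \bigunion_{i=1}^{k-1}\big(\concrete{\df_i}\setminus\concrete{\db_i}\big)
  = \bigunion_{i=1}^{k-1}\big(\concrete{\df_i}\setminus\concrete{\db_i}\big).
\]
Since this $M$ is a model of $\HSys$ by \autoref{lm:fwd-backw-model}, the only thing left to verify is $M \isect \concrete{\dgoal} = \emptyset$; establishing this exhibits the required model and hence unreachability of the goal.

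For the disjointness, I would distribute the intersection over the union, reducing the claim to showing that each summand satisfies $\big(\concrete{\df_i}\isect\concrete{\dgoal}\big)\setminus\concrete{\db_i} = \emptyset$, i.e.\ that $\concrete{\df_i}\isect\concrete{\dgoal}\subseteq\concrete{\db_i}$ for every $i$ in the range $1 \leq i \leq k-1$. The handle here is the defining constraint $\dgoal \Dmeet \df_i \Dleq \db_i$ of the iteration sequence \eqref{eq:fwd-backw-sequence}, which holds for all $i \geq 1$. Applying monotonicity of $\concreteName$ to this abstract inequality gives $\concrete{\dgoal \Dmeet \df_i} \subseteq \concrete{\db_i}$, and combining it with soundness of the abstract meet as an over-approximation of concrete intersection, $\concrete{\dgoal}\isect\concrete{\df_i}\subseteq\concrete{\dgoal\Dmeet\df_i}$, yields exactly the containment we need. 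Chaining this over $i = 1,\dots,k-1$ then gives $M\isect\concrete{\dgoal}=\emptyset$.

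The step to get right is the soundness property of the abstract meet, $\concrete{d_1}\isect\concrete{d_2}\subseteq\concrete{d_1\Dmeet d_2}$: it is precisely what converts the \emph{abstract} inequality $\dgoal\Dmeet\df_i\Dleq\db_i$ into the \emph{concrete} containment $\concrete{\df_i}\isect\concrete{\dgoal}\subseteq\concrete{\db_i}$, and the argument fails without it. This is the standard soundness requirement on the meet operator, and it holds with equality for the numeric domains (intervals, octagons, polyhedra) discussed after \autoref{lm:fwd-backw-model}; I would therefore state it explicitly as a hypothesis rather than use it silently. Everything else is routine set manipulation, and no fixpoint reasoning beyond what \autoref{lm:fwd-backw-model} already packages is needed.
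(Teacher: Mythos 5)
Your proposal is correct and follows essentially the same route as the paper's own proof: instantiate \autoref{lm:fwd-backw-model} at the index $k$ with $\concrete{\df_k}=\emptyset$, then use $\dgoal \Dmeet \df_i \Dleq \db_i$ to conclude $\concrete{\df_i}\isect\concrete{\dgoal}\subseteq\concrete{\db_i}$ for each $i$, hence disjointness. The only difference is that you make explicit the soundness assumption $\concrete{d_1}\isect\concrete{d_2}\subseteq\concrete{d_1\Dmeet d_2}$ on the abstract meet, which the paper uses silently when passing from the abstract inequality to the concrete containment; flagging it as a hypothesis is a reasonable refinement, not a divergence.
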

\begin{proof}
  If $\df_k = \Dbot$ then $\concrete{\df_k} = \emptyset$, and from 
  \autoref{lm:fwd-backw-model}, $M = \bigunion_{i=1}^{k-1} 
  \big(\concrete{\df_i} \setminus \concrete{\db_i}\big)$ is a model of $\HSys$.
  From \eqref{eq:fwd-backw-sequence}, it follows that for every $i$, 
  $\concrete{\dgoal} \isect \concrete{\df_i} \subseteq \concrete{\db_i}$, that 
  is $(\concrete{\df_i} \setminus \concrete{\db_i}) \isect \concrete{\dgoal} = 
  \emptyset$.
  This means that $M \isect \concrete{\dgoal} = \emptyset$.
\end{proof}
Thus, when there exists $k$ s.t. $\df_k = \Dbot$, we obtain a 
\emph{constructive} proof of unreachability of the goal that can later be 
checked.

\bparagraph{Result of the Analysis.}
Propositions \ref{lm:fwd-backw-model} and \ref{lm:fwd-backw-disjoint} provide a 
way to give additional information to the user of the analysis, apart from the 
verdict (\emph{safe} or \emph{potentially unsafe}).
Suppose, we compute the iteration sequence \eqref{eq:fwd-backw-sequence} up to 
the element $\df_k$ and then stop (whether because $\df_k = \Dbot$, or the 
sequence stabilized, or we reached a timeout, etc).
The object $\df_k$ in itself may not be interesting: it is not a model of 
$\HSys$, it is not a proof or a refutation of reachability of the goal.
If the user wishes to check the results of the analysis,
we may give them the whole iteration sequence up to $\df_k$.
Then, the user will need to confirm that the sequence indeed satisfies the 
conditions of \eqref{eq:fwd-backw-sequence}.
Alternatively, we may give the user the refined model of $\HSys$, i.e. some 
representation of $M = \concrete{\df_k} \union \bigunion_{i=1}^{k-1} 
\big(\concrete{\df_i} \setminus \concrete{\db_i}\big)$.
This will allow the user to not only check the model, but also, e.g., produce 
program invariants that can be used by another verification tool (e.g., Frama-C 
\cite{url-frama-c}, KeY \cite{key16book}, etc).
Representation of $M$ may require an abstract domain that is more expressive 
than the domain of the analysis, but may be more compact than the whole 
iteration sequence.
Alternatively, if $\df_i$ and $\db_i$ can be represented in logical form in some 
theory, so can $M$.

\bparagraph{Which Analysis Runs First.}
In the iteration sequence \eqref{eq:fwd-backw-sequence}, forward and backward 
analyses alternate, but which analysis runs first is actually not fixed.
We may start with forward analysis and compute $\df_1$ as normal, or we may take 
$\df_1 = \Dtop$ and start the computation with backward analysis.
A notable option is to do the first run of backward analysis in a more coarse 
abstract domain and switch to a more precise domain in subsequent runs.
For example, the initial run of backward analysis may only identify the 
predicates that can potentially be used to derive the goal:
\begin{equation}
  \label{eq:coarse-backw}
\begin{split}
  & \lfplo{\subseteq}{X}{\Pgoal \union \pre{\TPred}{X}}, \text{ where} \\
  & \Pgoal = \{ p \mid p(\v{c}) \in \goal \} \\
  & \TPred = \big\{ (\Pi, p') \mid \exists (A, a') \in \TRel.\,\Pi=\{p \mid 
  p(\v{c}) \in A\} \land a'=p'(\v{c'}) \big\}
\end{split}
\end{equation}
Then, we can take $\df_1 = \Dtop$, $\db_1$ to be some abstraction of 
\eqref{eq:coarse-backw}, and starting from $\df_2$, run the analysis with a more 
precise domain.
In program analysis, restricting attention to program locations that have a path 
to (i.e., are backward-reachable from) some goal location, is a known technique.
For example, K. Apinis, H. Seidl, and V. Voidani describe a sophisticated 
version of it \cite{seidl12swiss}.

\subsection{Revisiting the Query-Answer Transformation}
\label{sec:main/revisiting-qa}

In principle, the iteration sequence \eqref{eq:fwd-backw-sequence} can be 
emulated by an iterated simple query-answer transformation.
Let $\HSys$ be the original system of CHCs.
Let the element $\db_k$ of the iteration sequence \eqref{eq:fwd-backw-sequence} 
map every predicate $p_i$ to a formula $\beta_k^i$.
In particular, $\db_0$ will map every $p_i$ to $\true$.
Then, $\df_{k+1}$ can be found as a model of the system $\HSys^{\df}_{k+1}$.
To construct, $\HSys^{\df}_{k+1}$, for every CHC
$\hc{P_1,\cdots,P_n}{\phi}{P_{n+1}}$ (for $n \geq 0$) in the original system 
$\HSys$, we add to $\HSys^{\df}_{k+1}$ the clause $\hc{P_1,\cdots,P_n}{\phi 
\land \beta_k^{n+1}}{P_{n+1}}$.
Now let the element $\df_k$ map every $P_i$ to a formula $\delta_k^i$.
Then, $\db_k$ can be found as a model of the system $\HSys^{\db}_k$ that is 
constructed as follows.
For every CHC in the original system $\HSys$: 
$\hc{P_1,\cdots,P_n}{\phi}{P_{n+1}}$, we add to $\HSys^{\db}_{k}$ the clauses
$\hc{P_{n+1}}{\phi \land \bigand_{i=1}^n \delta_k^i}{P_1}$ through 
$\hc{P_{n+1}}{\phi \land \bigand_{i=1}^n \delta_k^i}{P_n}$.
Also, we add to $\HSys^{\db}_{k}$ the goal clause $\hc{\pfalse_k}{}{\pfalse}$.
If we compute the elements of the iteration sequence up to $\df_k$, then the 
function that maps every $p_i$ to $\delta_k^i \lor \bigor_{j=1}^{k-1} ( 
\delta_j^i \land \neg\beta_j^i )$ represents a model of the original system 
$\HSys$.
In particular, when $\df_1 = \Dtop$, and $k=2$, this produces a model, where 
every  $p_i$ maps to $\beta_1^i \limpl \delta_2^i$.

Thus, one has a choice, whether to take a fixpoint-based approach, as we did, or 
a transformation-based approach.
From the theoretical point of view, one will still have to prove that the 
iterated transformation allows to prove unreachability of the goal and to build 
a refined model, i.e., some analog of Propositions \ref{lm:fwd-backw-model} and 
\ref{lm:fwd-backw-disjoint}.
As one can see in Appendix~\ref{apx:proofs}, this is not trivial for the steps 
beyond the second.
From the practical point of view, we believe that our approach allows to more 
easily implement some useful minor features.
For example, the iteration sequence \eqref{eq:fwd-backw-sequence} naturally 
constrains $\db_i$ to be below $\df_i$ and $\df_i$ to be below $\db_{i-1}$, 
which in some cases makes widening and join less aggressive.
It should be possible though to achieve a similar effect for the query-answer 
transformation at the expense of introducing additional predicates and clauses.

On the other hand, an advantage of query-answer transformation is that it can be 
used as a preprocessing step for the analyses that are not based on abstract 
interpretation.
For example, B. Kafle and J. P. Gallagher report \cite{gallagher15qa} that it 
can improve the precision of a CEGAR-based analyser.

\section{Implementation and Experiments}
\label{sec:experiments}

We implemented our approach in a prototype abstract interpreter.
It can analyse numeric C programs that were converted to a system of CHCs with 
the tool SeaHorn \cite{gurfinkel15seahorn} (the input format is currently a 
technical limitation, and we wish to remove it in the future).
The implementation is written in OCaml and available online \cite{url-hcai}.
A notable feature of SeaHorn is that it introduces Boolean variables and 
predicate arguments even for programs without Boolean variables.
To represent sets of valuations of numeric and Boolean variables, we use 
Bddapron \cite{url-bddapron}.
We implement Kleene-like iteration as outlined in 
\autoref{sec:background/chc-ai}, which is similar to path focusing 
\cite{david11path-focusing,david12pagai}.
Iteration order and choice of widening points are based on F.~Bourdoncle's 
\cite{bourdoncle93iteration,Bourdoncle_PhD} recursive strategy (except that we implement it 
using a worklist with priorities).
As an SMT solver, we use Z3 \cite{url-z3}.
For comparison, in addition to the forward-backward iteration sequence 
\eqref{eq:fwd-backw-sequence}, we implemented an analysis based on query-answer 
transformation.

To evaluate our implementation, we took C programs from the categories 
\emph{loops} and \emph{recursive} of the Competition on Software Verification 
SV-COMP \cite{url-sv-comp}.
SeaHorn operates on LLVM bytecode produced by Clang 
\cite{url-clang}, and the resulting system of CHCs depends a lot on Clang 
optimization settings.
For example, constant folding may remove whole computation paths when they do 
not depend on non-deterministic inputs.
Or, Clang may replace recursion with a loop, which will make SeaHorn produce a 
linear system of CHCs instead of a non-linear one.
In our experiments, we compiled the input programs with two optimization levels: 
\texttt{-O3} (SeaHorn's default) and \texttt{-O0}.
As a result, we get a total of 310 systems of Horn clauses, out of which 158 are 
declared safe by SV-COMP.
Since we cannot prove unsafety, our evaluation focuses on safe systems.
Out of 158 safe systems, our tool can work with 123.
Other systems use features that are not yet supported in our tool (division,
non-numeric theories, etc).
Out of 158 safe systems, 74 are non-linear.

First, we evaluate the effect of combined forward-backward analysis.
The results are presented in \autoref{tab:svcomp-abstints}.
We compare three approaches.
The first is the one we propose in this paper, i.e., based on the 
forward-backward iteration sequence \eqref{eq:fwd-backw-sequence}.
We compute the elements of \eqref{eq:fwd-backw-sequence} up to $\df_5$.
If we decrease the limit from $\df_5$ to $\df_3$, we can prove safety of 2 less 
programs; increasing the limit to $\df_7$ gives no effect.
The second one a 2-step analysis based on query-answer transformation 
\cite{gallagher15qa}.
First, it runs forward analysis on a query-answer transformed system, then 
injects the interpretations of answer predicates in the original system and runs 
forward analysis again.
We implemented this analysis ourselves, and thus we are \emph{not} 
directly comparing our implementation to the tool Rahft \cite{gallagher16rahft}, 
where this analysis was first implemented.
Finally, we also run a simple forward analysis.
In \autoref{tab:svcomp-abstints}, we report the number of programs that we 
\emph{proved safe} with every approach.
One can see that our approach has a small advantage over both query-answer 
transformation and simple forward analysis.
Interestingly, B. Kafle and J. P. Gallagher report \cite{gallagher15qa} a 
\emph{much}  greater difference when moving from simple forward analysis to 
query-answer transformation.
This can be attributed to three factors.
First, their set of benchmarks is different, although it includes many programs 
from the same SV-COMP categories.
Second, their benchmarks are, to our knowledge, not pre-processed by Clang.
Third, as B. Kafle and J. P. Gallagher themselves report, some issues solved by 
adding backward analysis can as well be solved by path focusing, which our tool 
implements.

\begin{table}[b]
  \centering
  \begin{minipage}[b]{.45\textwidth}
    \centering
    \begin{tabular}{c|c|c|c|c}
      & & \multicolumn{3}{c}{Proven safe} \\
      Safe & Supported & This paper & QA & Fwd. \\\hline
      158 & 123 & 87 & 82 & 76\\
    \end{tabular}
    \caption{Comparison of abstract interpretation strategies.}
    \label{tab:svcomp-abstints}
  \end{minipage}
  \hfill
  \begin{minipage}[b]{.475\textwidth}
    \centering
    \begin{tabular}{c|c|c}
      & This paper & SeaHorn \\\hline
      Proven safe & 87 / 123 (70\%) & 133 / 158 (84\%) \\
    \end{tabular}
    \caption{Comparison to SeaHorn's builtin solver (with 1 minute timeout).}
    \label{tab:svcomp-seahorn}
  \end{minipage}
\end{table}

For reference, we also compare our tool to the solver that is integrated with 
SeaHorn (to our knowledge, it is based on the tool SPACER.
\cite{komuravelli13spacer,komuravelli14spacer}).
We present the results in \autoref{tab:svcomp-seahorn}.
SeaHorn can prove safety of more programs, which is expected since our tool is 
in an early stage of development.

\section{Tree Semantics of CHCs}
\label{sec:tree-semantics}

In this section, we briefly introduce tree semantics of CHCs.
Trees are not convenient objects to work with, and studying tree semantics is 
not the main purpose of this paper.
Thus, our description will not be fully rigorous.
Rather, our goal is to give the reader an intuition of why we construct 
collecting semantics (especially, backward and combined semantics) in the way we 
do, which is perhaps best explained when collecting semantics is viewed as an 
abstraction of tree semantics.

For the purpose of this section, a \emph{tree} is either a leaf node containing 
an atom, or an interior node that contains an atom and also has a non-zero 
number of child subtrees.
\[
  \text{Tree } \Coloneqq \leaf{a} \mid \tree{a}{t_1, \cdots, t_n}
\]
where $a \in \Atoms$ and every $t_i$ is a tree.
The \emph{root atom} of a tree is naturally defined as
\[
  \root{\leaf{a}} = a \qquad
  \root{\tree{a}{t_1, \cdots, t_n}} = a
\]
The set of \emph{leaves} of a tree is defined as
\[
  \leaves{\leaf{a}} = \{ a \} \qquad
  \leaves{\tree{a}{t_1, \cdots, t_n}} = \bigunion_{i=1}^n \leaves{t_i}
\]
The tree semantics of a system of CHCs $\HSys$ is a set of trees, where the 
parent-child relation is defined by the direct consequence relation $\TRel$.
To get more formal, let us first define the post-condition operation on trees as 
follows:
\[
\Tpost{\HSys}{X} = \begin{aligned}[t]
    & \big\{ \tree{a'}{t_1,\cdots,t_n} \mid t_1,\cdots,t_n \in X\\
    & \land \exists 
    (A,a') \in  \TRel.\,|A|=n\land A=\{\root{t_1},\cdots,\root{t_n} \} \big\} 
    \union {}\\
    & \{ \leaf{a} \mid (\emptyset,a) \in \TRel
  \}\end{aligned}
\]
Intuitively, the operation performs two distinct actions: (i) it produces a 
trivial tree $\leaf{a}$ for every initial transition $(\emptyset, a)$; and (ii) 
for every non-initial transition $(A, a')$, it creates every possible tree 
$\tree{a'}{t_1,\cdots,t_n}$, where $t_i$ are elements of $X$, and their roots 
correspond to distinct elements of $A$.
Then, we can define the \emph{forward tree semantics} of $\HSys$ as the least fixed 
point:
\[
\lfplo{\subseteq}{X}{\Tpost{\HSys}{X}}
\]
Intuitively, this is the set of trees, where leaves are initial atoms, and 
parent-child relation is defined by the direct consequence relation.
One can say that this is the set of derivation trees induced $\HSys$.
A notable property of forward tree semantics is that it is 
\emph{subtree-closed}, i.e., with every tree, it also contains all of its 
subtrees.

Let us now define the \emph{set-of-atoms} abstraction of a set of trees.
First, let us define an auxiliary predicate that tells whether an atom is a node 
of a tree.
\[
\begin{split}
  & \isnode{a}{\leaf{a'}} = (a = a') \\
  & \isnode{a}{\tree{a'}{t_1,\cdots,t_n}} = (a = a') \lor \bigor_{i=1}^{n} 
  \isnode{a}{t_i}
\end{split}
\]
Then, for a set of trees $T$, its set-of-atoms abstraction is
\[
\Tabstr{T} = \{ a \mid \exists t \in T.\,\isnode{a}{t} \}
\]
In particular, when $T$ is subtree-closed, one can show that
\begin{equation}
  \label{eq:subtree-closed-abstraction}
  \Tabstr{T} = \{ \root{t} \mid t \in T \}
\end{equation}
Let us observe that the set-of-atoms abstraction of the forward tree semantics 
is exactly the forward collecting semantics:
\begin{proposition}
  \label{lm:exact-fwd-abstraction}
  $
    \Tabstr{\lfplo{\subseteq}{X}{\Tpost{\HSys}{X}}} =
    \lfplo{\subseteq}{X}{\post{\TRel}{X}}
  $
\end{proposition}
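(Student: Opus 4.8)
The statement equates the set-of-atoms abstraction of the \emph{tree} least fixed point with the \emph{atom} least fixed point. My plan is to reduce the set-of-atoms abstraction $\TabstrName$ to the simpler \emph{root} abstraction $\rho(X) = \{\,\root{t} \mid t \in X\,\}$ and then apply a standard fixpoint-transfer argument. The reduction is immediate from what is already established: the forward tree semantics $T^\ast = \lfplo{\subseteq}{X}{\Tpost{\HSys}{X}}$ is subtree-closed, so by \eqref{eq:subtree-closed-abstraction} we have $\Tabstr{T^\ast} = \rho(T^\ast)$. Thus it suffices to prove $\rho(\lfplo{\subseteq}{X}{\Tpost{\HSys}{X}}) = \lfplo{\subseteq}{X}{\post{\TRel}{X}}$. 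Both operators $\TpostName$ and $\postName$ act on powerset lattices and are Scott-continuous (each is built from existentials over $\TRel$ and unions, which preserve directed joins), so both least fixed points are the unions of their Kleene iterates from $\emptyset$, and $\rho$ is a complete-join morphism with $\rho(\emptyset) = \emptyset$.

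\textbf{Key step (commutation).} The heart of the proof is the identity
\[
\rho\big(\Tpost{\HSys}{X}\big) = \post{\TRel}{\rho(X)} \qquad \text{for every set of trees } X .
\]
I would prove it by mutual inclusion, directly from the definitions. The root of every tree in $\Tpost{\HSys}{X}$ is either an atom $a'$ with $(\emptyset, a') \in \TRel$ (the leaf case, $A = \emptyset$) or the atom $a'$ of a composite tree $\tree{a'}{t_1,\cdots,t_n}$ arising from some $(A, a') \in \TRel$ with $A = \{\root{t_1},\cdots,\root{t_n}\}$ and $t_i \in X$. In both cases the witnessing set $A$ satisfies $A \subseteq \rho(X)$, so the root lies in $\post{\TRel}{\rho(X)}$. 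Conversely, given $a' \in \post{\TRel}{\rho(X)}$ there is $(A, a') \in \TRel$ with $A \subseteq \rho(X)$; since each element of $A$ is by definition the root of \emph{some} tree in $X$, I can choose one such tree per distinct element of $A$ and assemble a composite tree (or a leaf, when $A = \emptyset$) in $\Tpost{\HSys}{X}$ whose root is $a'$. This realization step — turning ``each element of $A$ is a root'' into an actual family of child subtrees matching the distinct elements of $A$ — is the only place that requires care, and it is the main (though mild) obstacle; note that it works for \emph{arbitrary} $X$, not merely subtree-closed ones, which is what makes the transfer clean.

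\textbf{Conclusion.} With the commutation established and $\rho$ a continuous bottom-preserving abstraction, the Cousot--Cousot fixpoint transfer theorem yields
\[
\rho\big(\lfplo{\subseteq}{X}{\Tpost{\HSys}{X}}\big) = \lfplo{\subseteq}{X}{\post{\TRel}{X}} .
\]
Concretely, this is just the induction $\rho(\TpostName^{n}(\emptyset)) = \postName^{n}(\emptyset)$ along the Kleene chains, whose step is exactly the commutation identity, followed by taking unions and using that $\rho$ commutes with unions. Combining this with $\Tabstr{T^\ast} = \rho(T^\ast)$ from the first paragraph gives the claimed equality. The subtree-closedness fact invoked there can, if desired, be proved in passing by observing that each iterate $\TpostName^{n}(\emptyset)$ is subtree-closed (the children $t_i$ of any newly built tree already lie in the previous, subtree-closed, iterate), so their union $T^\ast$ is subtree-closed as well.
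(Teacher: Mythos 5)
Your proof is correct, and it follows the same overall strategy as the paper's sketch (exact fixed point abstraction via a commutation identity, plus the subtree-closure fact \eqref{eq:subtree-closed-abstraction}), but it is organized around a different key lemma. The paper proves the commutation $\Tabstr{\Tpost{\HSys}{T}} = \post{\TRel}{\Tabstr{T}}$ for the set-of-atoms abstraction itself, which only holds for subtree-closed $T$, and therefore has to view the fixed point as living in the lattice of subtree-closed sets and argue that $\TpostName$ preserves that sublattice along the iteration. You instead factor $\TabstrName$ through the root abstraction $\rho(X)=\{\root{t}\mid t\in X\}$ and prove $\rho(\Tpost{\HSys}{X}) = \post{\TRel}{\rho(X)}$ for \emph{arbitrary} $X$; subtree-closure is then invoked exactly once, at the very end, to identify $\Tabstr{T^\ast}$ with $\rho(T^\ast)$. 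What your version buys is a cleaner application of fixpoint transfer (no restriction to a sublattice, no need to worry that $\Tpost{\HSys}{T}$ itself contain all subtrees of its elements — a point the paper's phrase ``preserves subtree-closure'' glosses over and which strictly speaking only holds along the increasing Kleene chain, as you note). What the paper's version buys is brevity: it states the single identity \eqref{eq:exact-post-abstraction} in exactly the form required by the cited theorem. Your realization step in the converse inclusion (choosing one witness tree per distinct element of $A$, and handling $A=\emptyset$ via the leaf clause) is exactly the care the definitions require, so there is no gap.
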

\begin{proof}[sketch]
  This is an instance of \emph{exact fixed point abstraction} \cite[theorem 
  7.1.0.4]{cousot79systematic-design}, and to prove the proposition, we need to 
  show that
  \begin{equation}
    \label{eq:exact-post-abstraction}
    \Tabstr{\Tpost{\HSys}{T}} = \post{\TRel}{\Tabstr{T}}
  \end{equation}
  This is not true for an arbitrary $T$, but can be shown as true when $T$ is  
  subtree-closed, as it follows from \eqref{eq:subtree-closed-abstraction}.
  The $\TpostName$ operation preserves subtree-closure, thus 
  \autoref{lm:exact-fwd-abstraction} can be seen as a fixed point in the lattice 
  of subtree-closed sets, where \eqref{eq:exact-post-abstraction} holds and thus 
  exact fixed point abstraction holds as well.
\end{proof}

Let us now define the \emph{backward tree semantics}.
For a set of trees $T$, let $\Tpre{\HSys}{T}$ be the set of trees that are 
produced from trees in $T$ by replacing a single leaf containing $a' \in \Atoms$ 
with a subtree $\tree{a'}{a_1,\cdots,a_n}$, s.t. $a_1,\cdots,a_n$ are distinct, 
and $(a', \{a_1,\cdots,a_n\}) \in \TRel$.
Also let $\tgoal = \{ \leaf{a} \mid a \in \goal \}$.
Then, the backward tree semantics of $\HSys$ is the least fixed point
\[
\lfplo{\subseteq}{X}{\tgoal \union \Tpre{\HSys}{X}}
\]
Intuitively, this is the set of trees where the root is in $\goal$, and 
parent-child relation is defined by the direct consequence relation.

Let us define a \emph{pre-tree} of a tree $t$ to be an object that is a tree and 
that is produced by selecting a number (possibly, zero) of non-root interior 
nodes and replacing every such interior node $\tree{a}{t_1, \cdots, t_n}$ with 
the leaf $\leaf{a}$.
A notable property of backward tree semantics is that it is 
\emph{pre-tree-closed}, i.e., with every tree, it also contains all of its 
pre-trees.
One can show that when $T$ is pre-tree closed,
\[
  \Tabstr{T} = \bigunion\{ \leaves{t} \mid t \in T \}
\]
Similarly to the forward case, the set-of-atoms abstraction of the backward tree 
semantics is exactly the backward collecting semantics.
\begin{proposition}
  $
  \Tabstr{\lfplo{\subseteq}{X}{\tgoal \union \Tpre{\HSys}{X}}} =
  \lfplo{\subseteq}{X}{\goal \union \pre{\TRel}{X}}
  $
\end{proposition}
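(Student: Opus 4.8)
The plan is to mirror the proof of \autoref{lm:exact-fwd-abstraction}: present the equality as another instance of exact fixed point abstraction \cite[theorem 7.1.0.4]{cousot79systematic-design}, so that everything reduces to a single commutation identity between the concrete tree operator $\lfun{X}{\tgoal \union \Tpre{\HSys}{X}}$ and its set-of-atoms counterpart $\lfun{X}{\goal \union \pre{\TRel}{X}}$. Concretely, I would prove the backward analogue of \eqref{eq:exact-post-abstraction}, namely
\[
\Tabstr{\tgoal \union \Tpre{\HSys}{T}} = \goal \union \pre{\TRel}{\Tabstr{T}} ,
\]
and argue that it holds on the sublattice of sets of trees that are \emph{pre-tree-closed} and whose roots all lie in $\goal$ --- the invariant that, as the intuition after the definition already states, is enjoyed by the backward tree semantics and by every iterate of its increasing Kleene computation from $\emptyset$. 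On that sublattice I may use the stated identity $\Tabstr{T} = \bigunion\{\leaves{t} \mid t \in T\}$, which is exactly what makes the abstraction exact.

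For the commutation itself I would treat the two inclusions separately. The $\supseteq$ direction is the easy one: every $a \in \goal$ is the atom of a leaf $\leaf{a} \in \tgoal$, hence a node; and for $a \in \pre{\TRel}{\Tabstr{T}}$ there is some $(A, a') \in \TRel$ with $a \in A$ and $a' \in \Tabstr{T} = \bigunion\leaves{t}$, so $a'$ is a leaf of some $t \in T$ that $\Tpre{\HSys}{\place}$ may expand into $\tree{a'}{a_1, \cdots, a_n}$ with $\{a_1, \cdots, a_n\} = A$, exhibiting $a$ as a node of a tree in $\Tpre{\HSys}{T}$. For $\subseteq$ I would apply the leaves reformulation to the left-hand side (legitimate once the operator is shown to preserve pre-tree-closure, see below): every node of $\tgoal \union \Tpre{\HSys}{T}$ is then a leaf of some pre-tree in the set, and I classify it by whether it is a root. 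A root is in $\goal$ (the goal-rooted invariant, since $\Tpre{\HSys}{\place}$ preserves roots and $\tgoal$-leaves are goal atoms); a non-root leaf has a parent interior node $a''$, which is itself a node of a tree of $T$ and therefore lies in $\Tabstr{T}$, with $(A'', a'') \in \TRel$ and the leaf in $A''$, so it lands in $\pre{\TRel}{\Tabstr{T}}$.

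The supporting lemma --- and the step I expect to be the main obstacle --- is that $\lfun{X}{\tgoal \union \Tpre{\HSys}{X}}$ keeps the Kleene iterates inside the chosen sublattice, i.e.\ preserves pre-tree-closure and goal-rootedness, exactly as $\TpostName$ was claimed to preserve subtree-closure in the proof of \autoref{lm:exact-fwd-abstraction}. Goal-rootedness is immediate. For pre-tree-closure I would take $t' \in \Tpre{\HSys}{T}$, obtained from $t \in T$ by expanding a leaf $a'$, and analyse a pre-tree $s$ of $t'$: if the collapse includes the freshly expanded node (or an ancestor of it), $s$ is a pre-tree of $t$ and hence already in $T$ (using $T \subseteq \tgoal \union \Tpre{\HSys}{T}$ along the increasing iteration); otherwise $a'$ survives as a leaf of a pre-tree of $t$, and re-expanding it realizes $s$ as an element of $\Tpre{\HSys}{T}$. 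The delicate points, which the paper itself flags as being treated non-rigorously for the tree development, are bookkeeping ones: atoms occurring at several nodes, the degenerate case where $t$ is a single leaf so that the expanded node is the root, and checking that collapsing an interior node indeed yields a non-root leaf whose parent witnesses membership in $\pre{\TRel}{\Tabstr{T}}$. Once the invariant is established, the increasing iterates stay pre-tree-closed and goal-rooted, the commutation holds on them, and exact fixed point abstraction yields the claimed equality of the two least fixed points.
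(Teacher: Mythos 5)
Your proposal follows essentially the same route as the paper's own (sketched) proof: both reduce the claim to exact fixed point abstraction via the commutation identity $\Tabstr{\tgoal \union \Tpre{\HSys}{T}} = \goal \union \pre{\TRel}{\Tabstr{T}}$ on pre-tree-closed sets, together with preservation of pre-tree-closure by the backward tree transformer. You merely flesh out details the paper leaves implicit (the goal-rootedness invariant needed for the leaves reformulation to cover root atoms, and the case analysis for closure preservation), which is consistent with the paper's admittedly non-rigorous treatment of the tree semantics.
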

\begin{proof}[sketch]
  The proof idea is similar to that of \autoref{lm:exact-fwd-abstraction}.
  We need to show that
  $
  \Tabstr{\tgoal \union \Tpre{\HSys}{T}} = \goal \union \pre{\TRel}{\Tabstr{T}}
  $
  which does hold when $T$ is pre-tree-closed; and pre-tree-closure is preserved 
  by the transformer $\lfun{X}{\tgoal \union \Tpre{\HSys}{X}}$.
\end{proof}

Now, let us consider the intersection of the forward and backward tree 
semantics: $\big(\lfplo{\subseteq}{X}{\Tpost{\HSys}{X}}\big) \isect 
\big(\lfplo{\subseteq}{X}{\tgoal \union \Tpre{\HSys}{X}}\big)$.
This is the set of trees that have initial atoms as leaves and a goal atom as 
root.
We can now observe that the combined forward-backward semantics 
\eqref{eq:fwd-backw-sem} is exactly the set-of-atoms abstraction of this object.

\begin{proposition}
  $
  \begin{aligned}[t]
    & \TabstrName\Big(\big(\lfplo{\subseteq}{X}{\Tpost{\HSys}{X}}\big) \isect 
    \big(\lfplo{\subseteq}{X}{\tgoal \union \Tpre{\HSys}{X}}\big)\Big) \\
    = {} & \lfplo{\subseteq}{X}{(\goal \isect M) \union \preR{M}{\TRel}{X}} \\
    & \text{where }  M = \lfplo{\subseteq}{X}{\post{\TRel}{X}}
  \end{aligned}
  $
\end{proposition}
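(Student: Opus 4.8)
The plan is to name the two tree semantics $F = \lfplo{\subseteq}{X}{\Tpost{\HSys}{X}}$ (forward) and $B = \lfplo{\subseteq}{X}{\tgoal \union \Tpre{\HSys}{X}}$ (backward) and to establish the identity by double inclusion against the combined semantics $N = \lfplo{\subseteq}{X}{(\goal \isect M) \union \preR{M}{\TRel}{X}}$. Before either inclusion I would record three facts. First, by \autoref{lm:exact-fwd-abstraction} together with \eqref{eq:subtree-closed-abstraction}, $\Tabstr{F} = \{\root{t} \mid t \in F\} = M$; and since $F$ is subtree-closed, \emph{every} node of every $t \in F$ is the root of an $F$-subtree and hence lies in $M$. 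Second, any tree $t$ with $\root{t} \in \goal$ that is $\TRel$-consistent (each interior node carries an atom $a$ and has distinctly-rooted children $t_1,\dots,t_n$ with $(\{\root{t_1},\dots,\root{t_n}\}, a) \in \TRel$) belongs to $B$: it is grown top-down from $\leaf{\root{t}} \in \tgoal$ by expanding one leaf at a time, each expansion being a legal step of $\lfun{X}{\tgoal \union \Tpre{\HSys}{X}}$, so an induction on tree size places $t$ in the least fixpoint $B$. Third, $N$ is the least \emph{pre}-fixpoint of its defining operator (Knaster--Tarski), so to get $N \subseteq S$ it suffices to check $(\goal \isect M) \union \preR{M}{\TRel}{S} \subseteq S$.

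For the inclusion $\Tabstr{F \isect B} \subseteq N$, I take a node $a$ of some $t \in F \isect B$ and argue $a \in N$ by induction on the depth of $a$ in $t$. Since $t \in F$, the first fact gives $a \in M$. If $a = \root{t}$ then $a \in \goal$ (because $t \in B$), hence $a \in \goal \isect M \subseteq N$. Otherwise $a$ has a parent $a'$ strictly closer to the root, and $a' \in N$ by the induction hypothesis; the children of $a'$ have roots forming a set $A$ with $(A, a') \in \TRel$ and $a \in A$, and $A \subseteq M$ by the first fact, so $a \in \preR{M}{\TRel}{N} \subseteq N$.

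For the converse $N \subseteq \Tabstr{F \isect B}$, I verify that $S := \Tabstr{F \isect B}$ is a pre-fixpoint. If $a \in \goal \isect M$, then $a \in M$ yields an $F$-tree $t_a$ with $\root{t_a} = a$; as $\root{t_a} \in \goal$ and $t_a$ is $\TRel$-consistent, the second fact gives $t_a \in B$, so $a \in S$. If $a \in \preR{M}{\TRel}{S}$, I pick $A \subseteq M$, $a' \in S$ with $(A, a') \in \TRel$ and $a \in A$, and let $t' \in F \isect B$ contain $a'$ as a node. I then perform tree surgery: replace the subtree of $t'$ rooted at that occurrence of $a'$ by $\tree{a'}{(t_b)_{b \in A}}$, where each $t_b \in F$ is a derivation tree with $\root{t_b} = b$ (available since $A \subseteq M$). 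The result $t$ still has root in $\goal$ and is $\TRel$-consistent everywhere---above $a'$ it agrees with $t'$, at $a'$ it uses the step $(A, a')$, and the grafted children are complete $F$-derivations---so $t \in F$, and by the second fact $t \in B$. Since $a \in A$ is the root of a child of $a'$ in $t$, we obtain $a \in S$.

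The main obstacle is this second direction, and specifically making the surgery airtight: one must confirm that grafting under $a'$ disturbs neither membership in $F$ (all leaves stay initial and all interior nodes stay $\TRel$-consistent, because the atom $a'$ and everything above it is untouched) nor membership in $B$ (the root remains a goal atom and consistency is preserved, invoking top-down buildability). The remaining points---that $F$ is subtree-closed, that $B$ is pre-tree-closed, and that the $\TRel$-consistent goal-rooted trees are exactly the elements of $B$---are the structural properties already flagged when $\TpostName$ and the backward transformer were defined, so they can be cited rather than reproven in detail.
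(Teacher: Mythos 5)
Your proof is correct, and it is substantially more than what the paper provides: for this proposition the paper deliberately gives only an intuitive explanation (stratifying the atoms of a tree in $F \isect B$ by depth and observing that depth $k$ contributes the $k$-th iterate of $\lfun{X}{(\goal \isect M) \union \preR{M}{\TRel}{X}}$), with no formal argument at all. Your first inclusion, the induction on the depth of a node, is a rigorous rendering of exactly that intuition, so there you and the paper coincide. The second inclusion is where you genuinely go beyond the paper: the pre-fixpoint check via Knaster--Tarski, and in particular the grafting argument that realizes an atom of $\preR{M}{\TRel}{S}$ as a node of an actual tree in $F \isect B$, is the half the paper silently omits, and it is also the half that explains why the \emph{restricted} pre-condition (rather than $\pre{\TRel}{\cdot} \isect M$) is the right operator --- the graft is only possible because every element of $A$ is witnessed by a complete derivation tree in $F$. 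Your three supporting facts are all sound, though each hides a small induction you should make explicit if writing this out in full: the characterization of $F$ as exactly the trees with initial leaves and $\TRel$-consistent interior nodes (structural induction in both directions, needed both for ``every node of a tree in $F$ lies in $M$'' and for concluding that the surgered tree is back in $F$), and the characterization of $B$ as containing every $\TRel$-consistent goal-rooted tree (induction on the number of interior nodes, expanding leaves in an ancestors-first order so that each expansion target is still a leaf). One cosmetic remark: the paper's definition of $\Tpre{\HSys}{\cdot}$ writes the consecution as $(a', \{a_1,\cdots,a_n\})$, with the coordinates swapped relative to $\TRel \subseteq \pset{\Atoms}\times\Atoms$; you silently use the correct orientation $(\{a_1,\cdots,a_n\}, a') \in \TRel$, which is clearly what is intended.
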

To see intuitively why this is true, let $t \in 
\big(\lfplo{\subseteq}{X}{\Tpost{\HSys}{X}}\big) \isect 
\big(\lfplo{\subseteq}{X}{\tgoal \union \Tpre{\HSys}{X}}\big)$ and let us 
observe which atoms may appear in $t$ at different depth.
We know that $\root{t} \in \goal \isect M$.
At depth one, we will observe sub-trees that have initial atoms as leaves and 
can be combined to produce $t$.
One can see that the set of atoms at depth one is $\preR{M}{\TRel}{\goal \isect 
M}$.
Similarly, the set of atoms at depth two is 
$\preR{M}{\TRel}{\preR{M}{\TRel}{\goal \isect M}}$.
Continuing this way, we get that the set-of-atoms abstraction of the 
intersection of forward and backward tree semantics is 
$\lfplo{\subseteq}{X}{(\goal \isect M) \union \preR{M}{\TRel}{X}}$.

To summarise, the combined forward-backward semantics \eqref{eq:fwd-backw-sem} 
is the set-of-atoms abstraction of the intersection of forward and backward tree 
semantics.
Since set-of-trees intersection and set-of-states abstraction do not commute, 
we need to introduce the restricted pre-condition operation to define 
the combined semantics.

\section{Related Work}

Combining forward and backward analyses is standard when analysing programs.
A good explanation of the technique is given by Patrick and Radhia Cousot 
\cite[section 4]{cousot99refining-mc}.
They also propose to use it for the analysis of logic programs 
\cite{cousot92logic-programs}.
Their combination is an intersection of forward and backward collecting 
semantics.

F. Benoy and A. King were perhaps the first to apply abstract interpretation in 
a \emph{polyhedral domain} to constraint logic programs 
\cite{king96polyhedra-clp}.
J.  P. Gallagher et al. in a series of works (see, e.g., 
\cite{gallagher02polyhedra-clp,gallagher15qa}) apply it to specialized CLPs or 
CHCs.
Previous sections discuss the differences between their approach and ours.
Later work by B. Kafle, J. P. Gallagher, and J. F. Morales 
\cite{gallagher15tree-refinement,gallagher16rahft} introduces another analysis 
engine that is not based on abstract interpretation.
M. Proietti, F. Fioravanti et al. propose a similar analysis  
\cite{proietti14specialization} that iteratively specializes the initial system 
of CHCs by propagating constraints both forward and backward and by 
heuristically applying join and widening operators.
This process is repeated until the analysis arrives at a system that can be 
trivially proven safe or a timeout is reached.
Notably, this analysis avoids explicitly constructing the model of the original 
system.

Multiple researchers were advocating using Horn clauses for program 
verification,
Including A. Rybalchenko \cite{rybalchenko12synthesizing-verifiers}, N.  
Bj{\o}rner, and others.
A survey was recently made by N. Bj{\o}rner, A.  Gurfinkel, K.  McMillan, and A.  
Rybalchenko \cite{bgmr15horn-verification}.
Tools that allow to solve problems stated as systems of Horn clauses include 
E-HSF \cite{rybalchenko13existentially-quantified}, Eldarica 
\cite{rummer13eldarica}, Z3 (with PDR \cite{bjorner12z3-pdr} and SPACER 
\cite{komuravelli13spacer,komuravelli14spacer} engines), and others.
As our implementation is in early development, we do not make a detailed 
comparison to these tools.

Path focusing was described by D. Monniaux and L.  Gonnord 
\cite{david11path-focusing} and implemented by J. Henry, D. Monniaux, and M.  
Moy in a tool PAGAI \cite{david12pagai}.
This is an approach to abstract interpretation, where one uses an SMT solver to 
find a path through a program, along which to propagate the post-conditions.

\section{Conclusion and Future Work}

In this paper, we introduce a new backward collecting semantics, which is 
suitable for alternating forward and backward abstract interpretation of Horn 
clauses.
We show how the alternation can be used to prove unreachability of the goal and 
how every subsequent run of an analysis yields a refined model of the system.
Experimentally, we observe that combining forward and backward analyses is 
important for analysing systems that encode questions about reachability in C 
programs.
In particular, the combination that follows our new semantics improves the 
precision of our own abstract interpreter, including when compared to a forward 
analysis of a query-answer-transformed system.

We see the following directions for future work.
\emph{First}, we wish to be able to infer models that are \emph{disjunctive} in 
a meaningful way.
Currently, as we use Bddapron, we produce models where a predicate maps to a 
disjunctive formula, but the disjunctions are defined by the Boolean arguments 
of the predicate, which are often unrelated to the interesting facts about 
numeric arguments.
We wish to explore how partitioning approaches designed for program analysis 
\cite{rival07trace-partitioning,henry-david12succint-something} can be applied 
to the analysis of Horn clauses.
\emph{Second}, we note that currently, for the combination of forward and 
backward analyses to work, we need to explicitly specify the goal 
(\texttt{query}, in terms of SeaHorn language).
It would be nice though, if we could use the benefits of the combined analysis 
(e.g., analysing the procedures only for reachable inputs) without having an 
explicit goal.
For that, we will need to be able to distinguish, which of the clauses of the 
form $\hc{}{\phi}{P}$ denote the program entry (the \texttt{main()} function in 
C terms), and which correspond to the procedures (recall Figures 
\ref{fig:parallel-increment-interproc-text} and 
\ref{fig:parallel-increment-interproc-chc}).
So far, the only solution we see is that this information needs to be 
communicated to our analyser as part of the input.
\emph{Finally}, we observe that so far we evaluate our approach using CHCs that 
result from reachability questions in relatively simple C programs.
These CHCs are also relatively simple and in particular contain at most two 
predicate applications in the bodies.
We wish to evaluate our approach using more complicated CHCs, e.g., that result 
from cell morphing abstraction \cite{david16cell-morphing}, but successfully 
analysing such systems requires to be able to produce disjunctive models.

\bibliographystyle{splncs03}
\bibliography{sas17}

\appendix

\section{Proofs}
\label{apx:proofs}

\lmFwdBackwModel*

\begin{proof}
  For convenience, let us replace the direct consequence relation $\TRel$ with 
  two objects: the set of initial atoms $\TInit = \{ a' \mid (\emptyset, a') \in 
  \TRel\}$ and the set of consecutions $\TStep = \{ (A, a') \in \TRel \mid A 
  \neq \emptyset \}$.
  Then, for every $R,X \subseteq \Atoms$, $\post{\TRel}{X} = \TInit \union 
  \post{\TStep}{X}$ and $\preR{R}{\TRel}{X} = \preR{R}{\TStep}{X}$.

  Now let us consider the first three elements of the descending sequence, 
  $d_1$, $b_1$, and $d_2$.
  For $d_1$ it holds that $\TInit \union \post{\TStep}{\concrete{\df_1}} 
  \subseteq \concrete{\df_1}$.
  That is, $\concrete{\df_1}$ is a model of $\HSys$ and the lemma statement 
  holds for $k=1$.

  For $b_1$, it holds that $(\concrete{\dgoal} \isect \concrete{\df_1}) \union 
  \preR{\concrete{\df_1}}{\TStep}{\concrete{\db_1}} \subseteq \concrete{\db_1}$.
  This means that for every conseqution $(A, a') \in \TStep$, if $A \subseteq 
  \concrete{\df_1}$ and $A \isect (\concrete{\df_1} \setminus \concrete{\db_1}) 
  \neq \emptyset$, then $a' \in (\concrete{\df_1} \setminus \concrete{\db_1})$.

  Finally, for $\df_2$ it holds that $(\TInit \union 
  \post{\TStep}{\concrete{\df_2}}) \isect \concrete{\db_1} \subseteq \df_2$.
  First, this means that $\TInit \subseteq (\concrete{\df_1} \setminus 
  \concrete{\db_1}) \union \concrete{\df_2}$.
  Indeed, by definition of $\df_1$, $\TInit \subseteq \concrete{\df_1}$ and by 
  definition of $\df_2$, $\TInit \isect \concrete{\db_1} \subseteq 
  \concrete{\df_2}$.
  Second, this means that $\post{\TStep}{(\concrete{\df_1} \setminus 
  \concrete{\db_1}) \union \concrete{\df_2}} \subseteq (\concrete{\df_1} 
  \setminus \concrete{\db_1}) \union \concrete{\df_2}$.
  Indeed, let is pick an arbitrary $(A, a') \in \TStep$, s.t. $A \subseteq 
  (\concrete{\df_1} \setminus \concrete{\db_1}) \union \concrete{\df_2}$.
  There are two possible cases.
  If $A \subseteq \concrete{\df_2}$ then by definition of $\df_2$, either $a' 
  \in \concrete{\df_2}$, or $a' \in (\concrete{\df_1} \setminus 
  \concrete{\db_1})$.
  If $A \not\subseteq \concrete{\df_2}$ then $A \isect (\concrete{\df_1} 
  \setminus \concrete{\db_1}) \neq \emptyset$, and $a' \in \concrete{\df_1} 
  \setminus \concrete{\db_1}$.
  This proves the statement of the lemma for $k = 2$ and also provides the base 
  case for the following inductive proof.

  Now let $k > 2$, $L_k = \bigunion_{i=1}^{k-1} \big(\concrete{\df_i} \setminus 
  \concrete{\db_i}\big)$, and $M_k = \concrete{\df_k} \union L_k$.
  Let the induction hypothesis be that: $\TInit \subseteq M_k$, 
  $\post{\TStep}{M_k} \subseteq M_k$ (i.e., $M_k$ is a model of $\HSys$), and
  for every $(A, a') \in \TStep$, if $A \subseteq M_k$ and $A \isect L_k \neq 
  \emptyset$, then $a' \in L_k$.

  Then, let us consider the two subsequent elements: $\db_k$ and $\df_{k+1}$ and 
  the two sets: $L_{k+1} = M_k \setminus \concrete{\db_k}$ and $M_{k+1} = 
  L_{k+1} \union \concrete{\df_{k+1}}$.
  
  For $\db_k$ it holds that $(\concrete{\dgoal} \isect \concrete{\df_k}) \union 
  \preR{\concrete{\df_k}}{\TStep}{\concrete{\db_k}} \subseteq \concrete{\db_k}$.
  That is, for every $(A, a') \in \TStep$, if $A \subseteq \concrete{\df_k}$ and 
  $A \isect (\concrete{\df_k} \setminus \concrete{\db_k}) \neq \emptyset$, then 
  $a' \in (\concrete{\df_k} \setminus \concrete{\db_k})$.

  For $\df_{k+1}$ it holds that $(\TInit \union
  \post{\TStep}{\concrete{\df_{k+1}}}) \isect \concrete{\db_k} \subseteq 
  \concrete{\df_{k+1}}$.

  First, observe that $\TInit \subseteq M_{k+1}$.
  Indeed, we know that $\TInit \subseteq M_k$ and that $M_{k+1} = (M_k \setminus 
  \concrete{\db_k}) \union \concrete{\df_{k+1}}$.
  By definition of $\df_{k+1}$, $\TInit \isect \concrete{\db_k} \subseteq 
  \concrete{\df_{k+1}}$.
  Thus, $\TInit \subseteq M_{k+1}$.

  Second, let us pick an arbitrary $(A, a') \in \TStep$, s.t. $A \subseteq 
  M_{k+1}$.
  Since $M_k$ is a model of $H$, we know that $a' \in M_k$.
  But then, there are three possible cases.
  (i) If $A \subseteq \concrete{\df_{k+1}}$, then either $a' \in 
  \concrete{\df_{k+1}}$, or $a' \notin \concrete{\db_k}$.
  That is, $a' \in (M_k \setminus \concrete{\db_k}) \union \concrete{\df_{k+1}} 
  = M_{k+1}$.
  (ii) If $A \subseteq \concrete{\df_k}$ and $A \not\subseteq 
  \concrete{\df_{k+1}}$, then $A \isect (\concrete{\df_k} \setminus 
  \concrete{\db_k}) \neq \emptyset$, and $a' \in \concrete{\df_k} \setminus 
  \concrete{\db_k} \subseteq M_{k+1}$.
  (iii) Finally, if $A \not\subseteq \concrete{\df_k}$, then $A \isect L_k \neq 
  \emptyset$, and from the hypothesis $a' \in L_k$.
  There are no other possible cases.
  This means that $\post{\TStep}{M_{k+1}} \subseteq M_{k+1}$ and thus $M_{k+1}$ 
  is a model of $\HSys$.
  Also, from (ii) and (iii) it follows that for $(A, a') \in \TStep$, if $A 
  \subseteq M_{k+1}$ and $A \isect L_{k+1} \neq \emptyset$, then $a' \in 
  L_{k+1}$.
\end{proof}

\end{document}